\documentclass[journal]{IEEEtran}

\usepackage{amsmath,amssymb,amsthm,amsfonts,bm}
\usepackage{graphicx}
\usepackage{caption}
\usepackage{subfig}
\usepackage{stfloats}
\usepackage{cite}
\usepackage{color,soul}
\usepackage{multirow}
\usepackage{rotating}
\usepackage{verbatim}
\usepackage{tabularx}
\usepackage{array}
\usepackage{url}
\usepackage{ragged2e}
\usepackage{xcolor,colortbl}
\usepackage{pifont}
\usepackage{balance}
\usepackage{algorithm}
\usepackage{algpseudocode}
\usepackage{orcidlink} 
\usepackage{cleveref}
\usepackage{enumitem}
\usepackage{booktabs}   
\usepackage{longtable}  

\definecolor{Gray}{gray}{0.85}

\newtheorem{theorem}{Theorem}
\newtheorem{lemma}[theorem]{Lemma}

\theoremstyle{definition}

\theoremstyle{remark}
\newtheorem{remark}[theorem]{Remark}

\newtheorem*{lemma*}{Lemma}

\begin{document}

\title{Square-Root Law for Covert Communication with Warden-Favorable Side Information}

\author{Hossein Ahmadi\,\orcidlink{0000-0002-3650-0280},
        Christian Deppe\,\orcidlink{0000-0002-2265-4887},
        Boulat A. Bash\,\orcidlink{0000-0002-1205-3906},
        and Eduard A. Jorswieck\,\orcidlink{0000-0001-7893-8435}%
\thanks{H. Ahmadi is with the Department of Electronic and Telecommunications, Politecnico di Torino, Italy (e-mail: hossein.ahmadi@polito.it).}%
\thanks{Christian Deppe and E. A. Jorswieck are with the Institute for Communications Technology, Technische Universität Braunschweig, Germany (e-mails: christian.deppe@tu-braunschweig.de, e.jorswieck@tu-braunschweig.de).}%
\thanks{Boulat A. Bash is with Dept. of Electrical and Computer Engineering, University of Arizona, Tucson, AZ, USA (e-mail: boulat@arizona.edu).}%
\thanks{Corresponding author: Hossein Ahmadi.}}

\maketitle
\begin{abstract}
Covert communication enables Alice to transmit to Bob while making the transmission difficult for Willie to detect. We study a scalar Gaussian covert-overlay model in which Alice's low-power covert signal is superimposed on an aggregate public component generated by Alice or other trackable sources. Willie is given all physically obtainable side information, including protocol details, timing, pilots, channel estimates, and calibration information, and subtracts his best estimate of the public component before testing. Covertness is imposed on the resulting residual through a relative-entropy constraint with budget $\delta$ conditioned on Willie's side information. In the stationary case, the residual under no covert transmission has variance $\sigma_0^2=\sigma_W^2+\sigma_e^2$, where $\sigma_W^2$ is Willie's receiver-noise variance and $\sigma_e^2$ is the irreducible cancellation error. Over $n$ channel uses, the maximal reliably transmissible covert payload is $R_C^\star\sqrt{n}(1+o(1))$ bits, where $R_C^\star=\frac{\sigma_0^2}{\sigma_B^2\ln 2}\sqrt{\delta}$, and $\sigma_B^2$ is Bob's receiver-noise variance. Thus, the square-root-law (SRL) constant is governed by the variance at Willie's actual detector input, not by receiver noise alone. Low-power Gaussian signaling achieves this constant, and a matching converse establishes first-order optimality within the conditioned additive Gaussian innovation model. For known time-varying conditioned residual variances, we also derive the first-order allocation, which assigns more covert power to larger residual variances. The results require a Gaussian post-cancellation null residual with known conditioned variance; non-Gaussian residuals and fixed non-vanishing variance uncertainty are outside the scope of this paper.
\end{abstract}

\begin{IEEEkeywords}
Covert communication, low-probability-of-detection communication, relative entropy, square-root law, residual floor.
\end{IEEEkeywords}
\IEEEpeerreviewmaketitle

\section{Introduction}\label{sec:intro}
\IEEEPARstart{C}{overt} communication, also called low-probability-of-detection (LPD) communication, aims to convey information while making the transmission statistically indistinguishable from noise to an adversary, traditionally called Warden Willie. In memoryless settings with a non-redundant no-input symbol, the classical square-root law (SRL) limits the covert payload to $\Theta(\sqrt{n})$ bits over $n$ channel uses \cite{Bash2012, Bash2013}. Accordingly, much work refines SRL constants and operating regimes: resolvability-based designs tighten message/key scaling \cite{Wang2016, Bloch2016}, finite-blocklength analyses characterize first- and second-order limits \cite{Tahmasbi2019}, and broadcast or multiple-access channel (MAC) extensions establish the optimality of time-division and simple budget splitting at SRL scales \cite{Tan2018, Arumugam2019}. Gaussian inputs and covariance shaping also characterize fundamental limits for multi-antenna additive white Gaussian noise (AWGN) links under LPD constraints \cite{Abdelaziz2017}.

Other work relaxes Willie’s knowledge or enriches the environment. Timing uncertainty enlarges the hypothesis space and yields $O\!\left(\min\{\sqrt{n\log T(n)},\,n\}\right)$ covert bits when the transmitter, Alice, selects a secret slot among $T(n)$ candidates
\cite{Bash2016}. Uninformed jammers and artificial noise (AN), whether deliberately injected or arising from channel uncertainty, increase Willie’s effective variance and can enable non-vanishing covert power or even positive rates under specific dynamics
\cite{Bash2017, Bash2018, Wang2019, He2017, Shahzad2017, Lee2014-1, Lee2014-2, Sobers2015}. Effective secrecy instead combines message secrecy and stealth through divergence from a default behavior; if that behavior can be chosen or shaped,
positive-rate stealth may be possible, unlike the classical SRL setting where Alice must hide in noise \cite{Hou2014,Hou2017}.

Multiuser structures offer further gains. Selecting one of multiple overt channels yields detection-error benefits \cite{Kim2022}, while non-orthogonal multiple access (NOMA) favors embedding with the hardest-to-decode overt user, improving spectral efficiency
under covertness constraints \cite{Ta2022}. Rate-splitting multiple access (RSMA) provides additional degrees of freedom to jointly serve overt and covert users under quality-of-service (QoS) and LPD constraints \cite{Zhang2024}. Recent mixed covert/non-covert
multiple-access results characterize simultaneous non-covert rates, covert SRL rates, and secret-key budgets, showing that non-covert users can enlarge the achievable covert-rate region under suitable coding and multiplexing \cite{Bounhar2024ICC,Bounhar2024arxiv}.

Networked and hierarchical settings combine covertness with secrecy and reliability. Untrusted relays and multi-warden environments yield nonconvex power-allocation problems and trade-offs between secrecy rate and detectability \cite{Forouzesh2020, Hu2018}. Distributed helper selection in slow fading improves covert throughput over single-jammer baselines \cite{Zheng2021}, while unmanned aerial vehicle (UAV)-aided uplinks use joint beamforming, trajectory, and multiuser design to characterize secret/covert-rate Pareto frontiers under outage and detection constraints \cite{Xu2025}. Together, these approaches provide a growing toolset based on timing, jamming/AN, fading uncertainty, and multiantenna/multiuser structure for approaching SRL limits \cite{Chen2023, Bash2015}.

Existing uncertainty- and side-information-based approaches, including timing \cite{Bash2016}, jammers/AN \cite{Bash2017, Bash2018, Wang2019, He2017, Lee2014-1,Lee2014-2, Sobers2015, Ahmadi2023}, fading, and channel state information (CSI) \cite{Shahzad2017, Lee2018}, generally treat overt traffic as incidental camouflage or part of a broader uncertainty environment. Effective-secrecy formulations compare communication with a default behavior and relate stealth to a divergence criterion \cite{Hou2014,Hou2017}, while mixed covert/non-covert multiple-access models show that non-covert users can enlarge the covert-rate region through joint coding and multiplexing \cite{Bounhar2024ICC,Bounhar2024arxiv}. By contrast, we study a warden-favorable innovation regime in which Willie uses all physically obtainable side information to perform the best block-level cancellation of the overt component before testing the resulting innovation residual. Covertness is imposed directly on the residual laws, conditioned on Willie’s side information, through a detector-agnostic relative-entropy constraint.

We focus on the SRL regime after Willie cancels the overt component as well as possible. The relevant noise level is therefore the variance of the residual actually tested under no covert transmission, not Willie’s raw receiver noise alone. In the stationary case, this variance is $\sigma_0^2=\sigma_W^2+\sigma_e^2$, where $\sigma_W^2$ is Willie’s receiver-noise variance and $\sigma_e^2$ is the irreducible cancellation-error variance. Our analysis requires this residual variance to be known after conditioning on Willie’s side information and excludes fixed non-vanishing uncertainty caused, for example, by untracked fading, interference, jamming, or power uncertainty. Under the residual-based relative-entropy constraint, the largest covert message size scales as $R_C^\star\sqrt{n}(1+o(1))$ bits, where the first-order constant $R_C^\star$, stated in Section~\ref{sec:main}, is governed by $\sigma_0^2$.

Our contributions are as follows:
\begin{enumerate}

\item \textbf{Warden-favorable innovation model and SRL regime boundary.} We formalize Willie’s side information, allow arbitrary block-level overt cancellation using all physically obtainable information, and impose covertness on the resulting residual. The regime requires a public/trackable overt component, no hidden-resource overt design, and a zero-mean Gaussian post-cancellation residual under no covert transmission with known conditioned variance. This separates the intended SRL regime from noise/channel-uncertainty \cite{He2017,Shahzad2017,Lee2014-1,Lee2014-2} and effective-secrecy regimes \cite{Hou2014,Hou2017} (Sections~\ref{sec:Model} and \ref{sec:regimes}).

\item \textbf{Relative-entropy budget and covert-power allocation.} We translate the conditioned relative-entropy budget into covert-power constraints. In the stationary case, an exact uniform allocation exhausts the budget, while a simpler closed-form approximation achieves the same first-order SRL constant. For time-varying residual variances, a small-signal expansion yields an ellipsoidal constraint and a first-order optimal scheduler that allocates more power to larger innovation-domain residual variances,
together with the corresponding heterogeneous payload scaling. We also provide a conservative pre-residualization design that imposes covertness on Willie’s full pre-test information (Section~\ref{sec:kl-equalizer}).

\item \textbf{Achievability and converse under the same conditioned relative-entropy constraint.} Using independent and identically distributed (i.i.d.) Gaussian covert symbols with uniform per-use variance, we prove achievability under the conditioned post-subtraction constraint. Under the same additive conditioned Gaussian innovation model, a matching first-order converse uses Gaussian extremality to show that Gaussian residuals minimize relative entropy for a fixed coordinate-variance profile; hence, sparse or on-off signaling cannot improve the first-order constant. This identifies the exact $\sqrt n$-scaling constant in the stationary regime (Sections~\ref{sec:Achievability} and \ref{sec:Converse}, and Theorem~\ref{thm:main}).

\end{enumerate}

The rest of the paper is organized as follows. Section~\ref{sec:problem} formalizes the model, side information, residualization, covertness constraint, and objective. Section~\ref{sec:kl-equalizer} derives the power equalizer and heterogeneous scheduler; Section~\ref{sec:regimes} interprets the innovation model and its boundary to linear-law/effective-secrecy regimes; Sections~\ref{sec:Achievability} and \ref{sec:Converse} establish achievability and the converse, assembled in the main theorem in Section~\ref{sec:main}; Section~\ref{sec:results} presents numerical results; and Sections~\ref{sec:future} and \ref{sec:conclusion} give extensions and conclusions.

\section{Problem Statement}\label{sec:problem}

This section provides the setting and notation for a scalar additive Gaussian observation model involving Bob's and Willie's received signals, an aggregate public component, and a low-power covert component generated by Alice. The main notation used in this section and throughout the paper is summarized in Table~\ref{tab:notations}. The key modeling step is that Willie first forms an innovation residual by performing block-level cancellation using his observation and all physically obtainable side information. Covertness is then enforced by constraining the relative entropy between the residual distributions under covert transmission and no covert transmission. Figure~\ref{fig:system} gives the operational picture, while this section sharpens the admissible SRL/innovation regime and delineates it from out-of-scope linear-law/effective-secrecy settings.

\begin{figure*}[t]
\centering
\includegraphics[width=\linewidth]{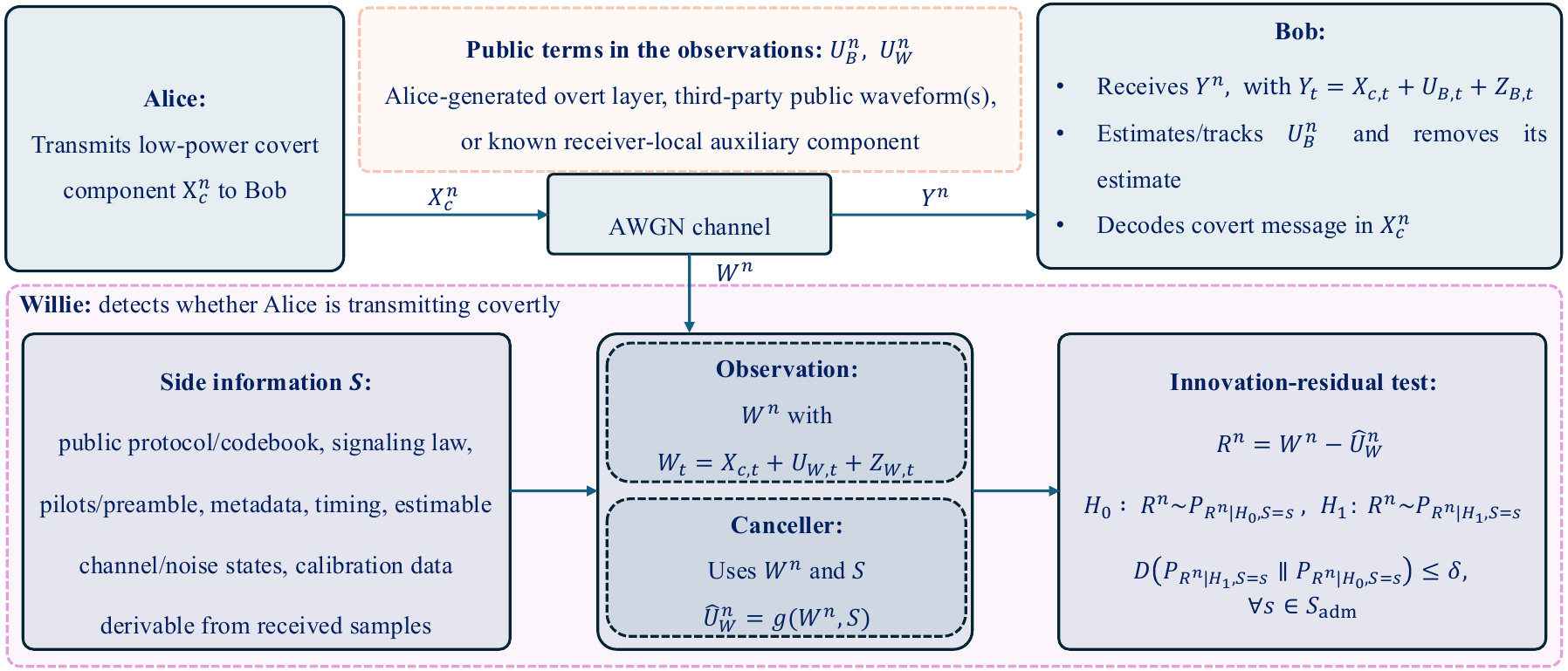}
\caption{Warden-favorable AWGN model under the conditioned post-subtraction relative-entropy constraint. Alice transmits the low-power covert component $X_c^n$ to Bob, while aggregate public/trackable components $U_B^n$ and $U_W^n$ appear in Bob’s and Willie’s observations. Bob removes or tracks $U_B^n$ before decoding the covert message. Willie uses his side information $S$ to form the block-level estimate $\hat U_W^n=g(W^n,S)$ and tests on the innovation residual $R^n=W^n-\hat U_W^n$. Covertness is imposed on the residual laws through the relative-entropy bound $D\!\left(P_{R^n|H_1,S=s}\,\Vert\,P_{R^n|H_0,S=s}\right)\le\delta$ for all $s\in\mathcal S_{\mathrm{adm}}$. The notation is defined formally in Sections~\ref{sec:Model}--\ref{sec:constraint}.}
\label{fig:system}
\end{figure*}

\begin{table*}
\caption{Mathematical notation used throughout the paper.}
\label{tab:notations}
\setlength{\tabcolsep}{3pt}
\begin{tabular}{|p{58pt}|p{187pt}|p{58pt}|p{187pt}|}
\hline
\textbf{Symbol} & \textbf{Meaning / definition} & \textbf{Symbol} & \textbf{Meaning / definition} \\
\hline
$X_{c,t}$, $X_c^n$ & Alice's covert symbol at time $t$ and vector $X_c^n=(X_{c,1},\ldots,X_{c,n})$. & $U_{B,t}$, $U_{W,t}$ & Aggregate public components at Bob and Willie at time $t$. \\
\hline
$U_B^n$, $U_W^n$ & Length-$n$ aggregate public vectors at Bob and Willie; may arise from Alice, third-party public sources, or other trackable waveforms. & $Y_t$, $W_t$ & Bob's and Willie's observations: $Y_t=X_{c,t}+U_{B,t}+Z_{B,t}$, $W_t=X_{c,t}+U_{W,t}+Z_{W,t}$. \\
\hline
$W_{i,t}$, $W_i^n$, $i\in\{0,1\}$ & Willie's scalar/vector observation under $H_i$; $W_{0,t}$ is under no covert transmission and $W_{1,t}$ under covert transmission. & $Y_{c,t}$ & Bob's covert-domain observation after public-component removal: $Y_{c,t}=X_{c,t}+Z_{B,t}$. \\
\hline
$Z_{B,t}$, $Z_{W,t}$ & AWGN at Bob and Willie, i.i.d.\ $\mathcal N(0,\sigma_B^2)$ and $\mathcal N(0,\sigma_W^2)$. & $Z_{W,i,t}$, $i\in\{0,1\}$ & Willie's AWGN sample under $H_i$; each has marginal law $\mathcal N(0,\sigma_W^2)$. \\
\hline
$\sigma_B^2$, $\sigma_W^2$ & Noise variances at Bob and Willie. & $S$ & Willie's side information per block: all physically obtainable protocol, timing, channel/noise, calibration, and public-component information. \\
\hline
$g(\cdot)$ & Willie-side public-component canceller: $\widehat U_W^n=g(W^n,S)$. & $\widehat U_W^n$, $\widehat U_{W,i}^n$ & Willie's estimate of the aggregate public component; under $H_i$, $\widehat U_{W,i}^n=g(W_i^n,S)$. \\
\hline
$R_i^n$, $i\in\{0,1\}$ & Residual vector under $H_i$: $R_i^n=W_i^n-\widehat U_{W,i}^n$; $R^n$ denotes the generic residual statistic. & $E_{W,i,t}$ & Willie-side subtraction error: $E_{W,i,t}=U_{W,t}-\widehat U_{W,i,t}$, $i\in\{0,1\}$. \\
\hline
$H_0$, $H_1$ & No-covert and covert-transmission hypotheses. Willie tests between $P_{R_0^n\mid S=s}$ and $P_{R_1^n\mid S=s}$. & $\mathcal S_{\mathrm{adm}}$ & Admissible side-information states consistent with the SRL/innovation assumptions. \\
\hline
$P_{R_i^n\mid S=s}$, $P_{R^n\mid H_i,S=s}$ & Conditional residual law under $H_i$, given $S=s$; the two notations are used interchangeably. & $P_{W_i^n,S}$, $P_{R_i^n,S}$ & Joint laws of Willie's pre-test observation or residual with side information under $H_i$. \\
\hline
$T(w^n,s)$ & Data-processing map: $T(w^n,s)=(w^n-g(w^n,s),s)$; its residual component is $R_i^n$. & $N^n(s)$, $N_t(s)$, $I_n$ & Innovation noise vector, its $t$-th coordinate, and the $n\times n$ identity matrix. \\
\hline
$\sigma_R^2(s)$ & Innovation-domain null variance: $\mathrm{Var}(R_t\mid H_0,S=s)$. & $\sigma_{R,\min}^2$, $\sigma_{R,\max}^2$ & Bounds on the known conditioned innovation-variance profile $\sigma_R^2(s)$ in (A3). \\
\hline
$\sigma_e^2(s)$ & State-dependent irreducible cancellation-residue variance, with $\sigma_R^2(s)=\sigma_W^2+\sigma_e^2(s)$. & $\sigma_e^2$ & Stationary cancellation-residue variance, i.e., $\sigma_e^2(s)\equiv\sigma_e^2$. \\
\hline
$\sigma_0^2$ & Stationary innovation-domain null variance: $\sigma_0^2=\sigma_W^2+\sigma_e^2$. & $\delta$ & Covertness budget: $\displaystyle D\!\left(P_{R^n\mid H_1,S=s}\,\middle\|\,P_{R^n\mid H_0,S=s}\right)\le \delta,\ \forall s\in\mathcal S_{\mathrm{adm}}$. \\
\hline
$D(\cdot\|\cdot)$ & Relative entropy. & $P_t$ & Covert per-use variance: $P_t=\mathrm{Var}(X_{c,t})$. \\
\hline
$\bar P_n$ & Average covert power: $\displaystyle \bar P_n \triangleq \frac{1}{n}\sum_{t=1}^{n}\mathbb{E}\!\left[(X_{c,t})^2\right]$. & $P_n$ & Uniform covert power; exact $P_n=\sigma_0^2\rho_n$, first-order $P_n=2\sigma_0^2\sqrt{\delta/n}$. \\
\hline
$P$ & Generic uniform per-use covert variance in single-letter formulas. & $\rho_t$ & Normalized covert power: $\rho_t=P_t/\sigma_0^2$ or $\rho_t=P_t/\sigma_{0,t}^2$ in the heterogeneous case. \\
\hline
$\rho$ & Uniform normalized covert power: $\rho=P/\sigma_0^2$ when $P_t\equiv P$. & $f(\rho)$ & Relative-entropy increment: $f(\rho)=\frac12\bigl(\rho-\ln(1+\rho)\bigr)$. \\
\hline
$u_\delta$ & Unique solution of $f(u_\delta)=\delta/n$. & $\rho_n$ & Uniform equalizer solution of $\frac{n}{2}\bigl(\rho_n-\ln(1+\rho_n)\bigr)=\delta$. \\
\hline
$P_e^{(n)}$ & Bob's average block error probability. & $B_n$ & Achievable covert message size in bits over blocklength $n$. \\
\hline
$B_n^\star$ & Maximal achievable covert message size in bits over blocklength $n$. & $R_C^\star$ & Optimal SRL constant: $\displaystyle R_C^\star=\frac{\sigma_0^2}{\sigma_B^2\ln 2}\sqrt{\delta}$. \\
\hline
$\bar{R}_C$ & Baseline SRL constant: $\displaystyle \bar{R}_C=\frac{\sigma_W^2}{\sigma_B^2\ln 2}\sqrt{\delta}$. & $I(\cdot;\cdot)$ & Mutual information. \\
\hline
$h(\cdot)$ & Differential entropy. & $\Sigma_0$ & Residual covariance under $H_0$ in the colored-residual extension. \\
\hline
$\mathrm{diag}(P_t)$ & Diagonal matrix with $t$-th diagonal entry $P_t$. & $\sigma_{0,t}^2$ & Time-varying innovation-domain null variance. \\
\hline
$P_t^\star$ & First-order optimal heterogeneous scheduler, satisfying $P_t^\star\propto \sigma_{0,t}^4$. & $\ln 2$ & Natural-log constant used for rates in bits. \\
\hline
$o(1)$, $O(\cdot)$ & Landau symbols for asymptotics. & $\mathcal N(0,\cdot)$ & Zero-mean Gaussian distribution with the indicated variance/covariance. \\
\hline
$p_t$, $q_t$ & Per-coordinate marginal laws used in the \hyperref[app:KL]{Appendix} for product-additivity of relative entropy. & $ $ & $ $ \\
\hline
\end{tabular}
\end{table*}

\subsection{System Model}\label{sec:Model}

For blocklength $n$, Alice's low-power covert component is denoted by $X_c^n \triangleq (X_{c,1},\ldots,X_{c,n})$. Under no covert transmission, we set $X_c^n \equiv 0$. The public component is represented at Bob's and
Willie's receivers by 
\begin{equation}\label{eq:public}
U_B^n \triangleq (U_{B,1},\ldots,U_{B,n}), \qquad
U_W^n \triangleq (U_{W,1},\ldots,U_{W,n}).
\end{equation}
These components may arise from an Alice-generated overt layer, one or more third-party public transmissions, or other trackable public waveforms. They are not counted as covert payload and are not modeled as secrecy-bearing message variables.

Bob receives the length-$n$ block $Y^n \triangleq (Y_1,\ldots,Y_n)$, whose $t$-th component is
\begin{equation}\label{eq:Bob}
Y_t = X_{c,t}+U_{B,t}+Z_{B,t}, \qquad t=1,\ldots,n.
\end{equation}
Willie observes the length-$n$ block $W^n \triangleq (W_1,\ldots,W_n)$, whose $t$-th component is
\begin{equation}\label{eq:Willie}
W_t = X_{c,t}+U_{W,t}+Z_{W,t}, \qquad t=1,\ldots,n.
\end{equation}
Here $Z_B^n \triangleq (Z_{B,1},\ldots,Z_{B,n})$ and $Z_W^n \triangleq (Z_{W,1},\ldots,Z_{W,n})$ are Bob’s and Willie’s receiver-noise sequences, respectively. The samples $\{Z_{B,t}\}_{t=1}^n$ are i.i.d.~Gaussian random variables with distribution $\mathcal{N}(0,\sigma_B^2)$, and the samples $\{Z_{W,t}\}_{t=1}^n$ are i.i.d.~Gaussian random variables with distribution $\mathcal{N}(0,\sigma_W^2)$. The two noise sequences are mutually independent and are independent of the covert symbols and the public components. The parameters $\sigma_B^2$ and $\sigma_W^2$ denote Bob’s and Willie’s receiver-noise variances.

\paragraph{Willie's side information $S$}
Let \(S\) denote all side information physically obtainable by Willie per block. This includes the public protocol, the covert signaling law and codebook unless protected by a separately specified secret key, pilots/preambles, metadata, timing, and any calibration, channel, or noise estimates derivable from the public protocol and Willie’s received samples. The transmitted message and any private random seed or secret key, if present, are not included in \(S\).

\paragraph{Warden-favorable cancellation and innovation residual}
The two hypotheses correspond to the absence or presence of the covert component in Willie’s observation. More precisely, let
\begin{equation}\label{eq:willie_hypotheses}
\begin{aligned}
H_0 &: \quad W_{0,t}=U_{W,t}+Z_{W,0,t},\\
H_1 &: \quad W_{1,t}=X_{c,t}+U_{W,t}+Z_{W,1,t},
\end{aligned}
\qquad t=1,\ldots,n .
\end{equation}
For \(i\in\{0,1\}\), we write $W_i^n \triangleq (W_{i,1},\ldots,W_{i,n})$ for Willie’s length-\(n\) observation vector under hypothesis \(H_i\).
Willie does not observe $X_c^n$ directly. Instead, under each hypothesis, he first performs warden-favorable cancellation of the overt component using his received block and side information. The same cancellation rule is applied under both hypotheses:
\begin{equation}\label{eq:cancellation}
\widehat U_{W,i}^n = g(W_i^n,S), \qquad i\in\{0,1\},
\end{equation}
where \(g\) is Willie’s public-component canceller based on \(W^n\) and \(S\), as formalized in Assumption (A2) below. The corresponding post-cancellation residual vectors are
\begin{equation}\label{eq:residual}
R_i^n \triangleq W_i^n-\widehat U_{W,i}^n, \qquad i\in\{0,1\}.
\end{equation}
When no hypothesis index is needed, we write \(R^n\) for the generic length-\(n\) residual statistic used by Willie.

Define the Willie-side public-component estimation error under hypothesis $H_i$ as
\begin{equation}\label{eq:estimation_error}
E_{W,i,t}\triangleq U_{W,t}-\widehat U_{W,i,t}, \qquad i\in\{0,1\}.
\end{equation}
Then, under $H_0$,
\begin{equation}\label{eq:estimation_0}
R_{0,t}=E_{W,0,t}+Z_{W,0,t},
\end{equation}
whereas under $H_1$,
\begin{equation}\label{eq:estimation_1}
R_{1,t}=X_{c,t}+E_{W,1,t}+Z_{W,1,t}.
\end{equation}
Therefore, Willie’s operational statistical test is a test between the induced residual laws
\begin{equation}\label{eq:residual_test}
H_0:\ R^n\sim P_{R_0^n|S=s}
\qquad\text{vs.}\qquad
H_1:\ R^n\sim P_{R_1^n|S=s}.
\end{equation}
Equivalently, throughout the paper we write $P_{R^n|H_i,S=s}=P_{R_i^n|S=s}$ for $i\in\{0,1\}$.

\paragraph{Innovation-domain (SRL) regime assumptions}
The analysis below is intended in the SRL/innovation regime characterized by the following restrictions (stated here and used throughout):
\begin{itemize}
\item[(A1)] \emph{Trackable public component:} the aggregate component $U_W^n$ is either standardized, publicly specified, or otherwise trackable by Willie from protocol information and Willie-available estimation procedures, so that the timing/gain/state needed for subtraction is physically obtainable.
\item[(A2)] \emph{Warden-favorable cancellation:} Willie chooses the public-component canceller \(g\) from the class of physically admissible cancellation rules based on \(W^n\) and \(S\), using all physically obtainable information to reduce the residual uncertainty of the public component before his residual-domain test. The SRL analysis is conditional on the residual law induced by this chosen canceller.
\item[(A3)] \emph{Known Gaussian innovation residual after conditioning:} Conditioned on \(S=s\) under \(H_0\), the post-cancellation innovation residual is zero-mean Gaussian with a known variance profile. Formally, for every admissible side-information state \(s\), after Willie applies the chosen public-component canceller \(g\), there exists an innovation noise vector \(N^n(s)\), independent of the covert codeword \(X_c^n\), such that $R_0^n = N^n(s), \qquad R_1^n = X_c^n + N^n(s)$. In the stationary case, $N^n(s) \sim \mathcal{N}(0,\sigma_0^2 I_n),\qquad\sigma_0^2=\sigma_W^2+\sigma_e^2$, where \(I_n\) is the \(n\times n\) identity matrix. In the heterogeneous case, the coordinates are independent with $N_t(s) \sim \mathcal{N}(0,\sigma_{0,t}^2)$. Equivalently, 
\begin{equation}
    \operatorname{Var}\!\left(R_t \mid H_0, S=s\right)
    =
    \sigma_R^2(s).
\end{equation}

For regularity, assume
\begin{equation}
    0
    <
    \sigma_{R,\min}^2
    \leq
    \sigma_R^2(s)
    \leq
    \sigma_{R,\max}^2
    <
    \infty
\end{equation}
for known constants. These bounds are not an unknown fixed uncertainty interval; they only ensure that the known conditioned residual-variance profile is finite and bounded away from zero. The stationary case $\sigma_R^2(s)\equiv\sigma_0^2$ is treated as the default in Sections \ref{sec:kl-equalizer}-\ref{sec:main}.

The Gaussian innovation model in (A3) is an achievability model for calibrated cancellation regimes, not a universal model for all cancellers. It is exact in standard linear-Gaussian estimation, where the innovation sequence is zero-mean Gaussian and white, and its covariance is determined by the filtering model \cite[Ch.~5, Secs.~5.3--5.4]{anderson1979optimal}. It is also a standard analytical approximation in interference-cancellation analysis: successive interference cancellation for code-division multiple access (CDMA) systems has been analyzed using Gaussian approximations for the post-cancellation decision noise \cite{patel1994sic}; nonlinear minimum mean-square-error (MMSE) and probabilistic data association (PDA) multiuser detectors approximate residual multiple-access interference as multivariate Gaussian \cite{tan2006gaussian}; and full-duplex models represent residual self-interference after cancellation as an additive Gaussian impairment with variance set by the residual self-interference level \cite{zlatanov2017residual}. Thus, (A3) applies when Willie’s conditioned post-cancellation null residual can be modeled as zero-mean Gaussian with known variance. If this model is rejected by measurements or residual diagnostics, the achievability result in Section~V does not apply, although Lemma~\ref{lem:Gaussian} still shows that Gaussian residuals are least detectable for a fixed coordinate-variance profile.

\item[(A4)] \emph{No hidden-resource public-component design:} the public component is not used as a hiding resource unknown to Willie. In particular, Alice cannot encode hidden resources into any Alice-controlled overt layer, scheduling pattern, silent-gap structure, or protocol feature that Willie cannot track or condition out through $S$.
\end{itemize}

Not every cancellation map \(g(W^n,S)\) yields the known Gaussian innovation residual in (A3). The results in Sections~\ref{sec:kl-equalizer}--\ref{sec:main} apply only when the residual law induced by Willie’s chosen canceller satisfies (A3). If cancellation removes part of \(X_c^n\), induces dependence between \(X_c^n\) and the innovation residual, produces a non-Gaussian residual, or leaves non-negligible residual-variance uncertainty, then the present SRL theorem does not apply. In such cases, including untracked fading, interference, or power uncertainty, noise/channel-uncertainty, jammer- or artificial-noise-assisted covertness, or effective-secrecy mechanisms may instead dominate \cite{Bash2017,Bash2018,He2017,Shahzad2017,Sobers2015,Hou2014,Hou2017}.

\begin{remark}[Origin and multiplicity of public components] \label{rem:public}
The analysis depends on Willie’s final post-cancellation residual, not on the origin or number of public components. Thus, \(U_W^n\) may comprise an Alice-generated overt signal, third-party public waveforms, or any trackable combination thereof. If $U_{W,t}=\sum_{k=1}^{K}U_{W,k,t}$, Willie may cancel the components jointly or successively using his available side information. The results in Sections~\ref{sec:kl-equalizer}--\ref{sec:main} remain valid whenever the resulting aggregate residual satisfies (A3).
\end{remark}

\paragraph{Bob-side overt removability}
For the achievability analysis, only the covert component $X_c^n$ carries the covert payload. Bob is assumed to estimate, track, decode, or otherwise remove the aggregate public component $U_B^n$ affecting his observation by using the public protocol specification, his own received signal, and standard receiver-side estimation procedures. This assumption does not require Bob to possess hidden side information about the public component. It only means that, after Bob’s receiver processing, the residual public-component error is either negligible on the $\sqrt n$ scale or can be absorbed into an effective post-removal noise variance. Consequently, Bob’s effective observation for the counted covert payload is the covert-only AWGN observation used in Section~\ref{sec:Achievability}. If a non-negligible independent Bob-side residual remains, the same first-order expression is obtained by replacing $\sigma_B^2$ with the corresponding effective post-removal noise-plus-residual variance.

The overall warden-favorable innovation-regime system model is summarized in Figure~\ref{fig:system}.

\subsection{Covertness Constraint}\label{sec:constraint}
Fix a budget $\delta>0$, and covertness is enforced on the innovation residual and conditioned on Willie's side information. Let $\mathcal S_{\mathrm{adm}}$ denote the set of side-information states consistent with (A1)--(A4), and let $P_{R_i^n\mid S=s}$ denote the conditional law of Willie's residual vector under $H_i$, $i\in\{0,1\}$. For probability measures $P$ and $Q$, the relative entropy from $P$ to $Q$ is denoted by $D(P\|Q)$ and is defined as 
\begin{equation}\label{eq:relative_entropy_def} 
D(P\|Q)
\triangleq
\int \log\!\left(\frac{\mathrm dP}{\mathrm dQ}\right)\,\mathrm dP,
\end{equation}
when $P$ is absolutely continuous with respect to $Q$, and $D(P\|Q)=+\infty$ otherwise. 
As in prior SRL covert-communication works \cite{Bash2012,Bash2013,Wang2016,Bloch2016,Tahmasbi2019}, we use relative entropy to quantify covertness because it controls the statistical distinguishability between Willie’s residual laws under $H_0$ and $H_1$ and therefore gives a detector-agnostic constraint on Willie’s ability to decide whether covert transmission is present. The conditioned post-subtraction relative-entropy constraint is
\begin{equation}\label{eq:KL-post}
D\!\left(P_{R_1^n\mid S=s}\,\big\|\,P_{R_0^n\mid S=s}\right)\le \delta,\qquad \forall s\in \mathcal S_{\mathrm{adm}}.
\end{equation}

\subsection{Design Variables and Objective}\label{sec:design}
We optimize over all blocklength-$n$ encoder--decoder pairs for the covert layer (no structural restrictions). The covert symbols are taken without loss of generality to be zero-mean with per-use variances $P_t=\operatorname{Var}(X_{c,t})$, and we operate in the vanishing-power regime
\begin{equation}
    \max_{1\le t\le n}\frac{P_t}{\sigma^2_{R,\min}}\to 0
    \qquad \text{as } n\to\infty .
\end{equation}
The corresponding block-average covert power is $\bar P_n \triangleq \frac{1}{n}\sum_{t=1}^n\mathbb E[X_{c,t}^2]$, where $\mathbb E[\cdot]$ denotes expectation. The condition above implies $\bar P_n\to 0$. In the uniform case $P_t\equiv P_n$, one has $\bar P_n=P_n$.

Let $B_n$ denote the number of message bits of a given code, and $B_n^\star$ denote the maximal achievable value under \eqref{eq:KL-post} among codes with average block error $P_e^{(n)} \to 0$ as $n \to \infty$. The problem is
\begin{equation}\label{eq:problem}
B_n^\star=\max B_n \quad \text{subject to \eqref{eq:KL-post}.}
\end{equation}
All logarithms are natural; rates are reported in bits via division by $\ln 2$.

\section{Relative-Entropy Control and Power Design}\label{sec:kl-equalizer}

This section turns the innovation-domain covertness constraint \eqref{eq:KL-post} into covert-power constraints and corresponding power-allocation schedules. The expressions derived below depend on the covertness budget $\delta$ and on the innovation-domain null variance. In the stationary innovation case, $\sigma_R^2(s)\equiv \sigma_0^2$ for all admissible $s\in\mathcal S_{\mathrm{adm}}$, and the relative-entropy expressions below hold uniformly over $s\in\mathcal S_{\mathrm{adm}}$ when conditioned on $S=s$.

Proofs of the Gaussian relative-entropy identity, the increment bounds, and the product-additivity formula used in \eqref{eq:KLconvert}--\eqref{eq:KL-sum} are given in the \hyperref[app:KL]{Appendix}.

By Assumption (A3), the stationary case considered in Sections~\ref{sec:kl-equalizer}--\ref{sec:main} has a conditioned product-Gaussian null residual. In the Gaussian signaling calculations of this section, independent symbols $X_{c,t}\sim\mathcal N(0,P_t)$ therefore induce $R_{0,t}\mid S=s\sim \mathcal N(0,\sigma_0^2),\qquad R_{1,t}\mid S=s\sim \mathcal N(0,\sigma_0^2+P_t)$, independently across \(t\). We call \(P_t\equiv P_n\) a uniform schedule and \(\{P_t\}_{t=1}^n\) a non-uniform schedule.

\subsection{Relative Entropy Under a Variance Increase (Uniform Schedules)}\label{sec:uniform}
Let $P\ge 0$ denote the per-use covert variance in the uniform schedule, so that $P_t\equiv P$. In achievability, this is enforced by choosing $X_{c,t}\sim \mathcal N(0,P)$ independent of the null innovation term $N_{t}(s)$, where $R_{0,t}=N_{t}(s)$. Set $\rho \triangleq P/\sigma_0^2$. Then
\begin{equation}\label{eq:KLconvert}
D\!\left(\mathcal N(0,\sigma_0^2+P)\,\middle\|\,\mathcal N(0,\sigma_0^2)\right)
=\frac12\bigl(\rho-\ln(1+\rho)\bigr).
\end{equation}
Moreover, for all $\rho \ge 0$,
\begin{equation}\label{eq:KLallt}
\frac{\rho^2}{4(1+\rho)} \le \frac12\bigl(\rho-\ln(1+\rho)\bigr) \le \frac{\rho^2}{4}.
\end{equation}

\subsection{Relative Entropy Under a Variance Increase (Non-Uniform Schedules)}\label{sec:non_uniform}
With per-use variances $\{P_t\}$, set \(\rho_t \triangleq P_t/\sigma_0^2\). If $R_0^n$ and $R_1^n$ have product laws conditioned on $S=s$, then
\begin{equation}\label{eq:KL-sum}
D\!\left(P_{{R_1^n}|S=s}\,\middle\|\,P_{{R_0^n}|S=s}\right)
= \sum_{t=1}^n \frac12\bigl(\rho_t-\ln(1+\rho_t)\bigr).
\end{equation}

Under the same conditioned product-law residual model used for \eqref{eq:KL-sum}, namely, that conditioned on \(S=s\), the residual samples are independent across time indices \(t=1,\ldots,n\) under both \(H_0\) and \(H_1\), the lower bound in \eqref{eq:KLallt} yields
\begin{equation}\label{eq:KL-lower-quad}
D\!\left(P_{{R_1^n}|S=s}\,\middle\|\,P_{{R_0^n}|S=s}\right)
\ge \frac14\sum_{t=1}^n \frac{\rho_t^2}{1+\rho_t}.
\end{equation}
If \(\max_{1\le t\le n}\rho_t \to 0\) as \(n\to\infty\), then
\begin{equation}
\min_{1\le t\le n}\frac{1}{1+\rho_t}=1-o(1),
\end{equation}
and therefore
\begin{equation}\label{eq:KL-lower-quad1}
D \ge \frac{1-o(1)}{4}\sum_{t=1}^n \rho_t^2
= \frac{1-o(1)}{4\sigma_0^4}\sum_{t=1}^n P_t^2.
\end{equation}
Consequently, \eqref{eq:KL-lower-quad}--\eqref{eq:KL-lower-quad1} are used in this paper only under the conditioned product-law residual model, i.e., when, for each admissible \(s\), the residual samples are independent across time indices under both hypotheses after conditioning on \(S=s\).

\subsection{Total-Power Optimality Under a Relative-Entropy Budget}
\begin{lemma}[Uniform schedule maximizes total power for a given relative-entropy budget] \label{lem:uniform-opt}
Let $f(\rho) = \frac12(\rho - \ln(1+\rho)), \rho \ge 0$. For any $\{\rho_t\}_{t=1}^n$ with $\sum_{t=1}^n f(\rho_t) \le \delta$, $ \sum_{t=1}^n \rho_t \le n u_\delta$, where $u_\delta$ solves $f(u_\delta) = \delta/n$. Equality holds if and only if $\rho_t \equiv u_\delta$.
\end{lemma}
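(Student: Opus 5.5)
The plan is to use strict convexity and monotonicity of $f$ together with Jensen's inequality. First I will record the basic properties of $f(\rho)=\frac12(\rho-\ln(1+\rho))$ on $[0,\infty)$. We have $f(0)=0$ and $f'(\rho)=\frac12\cdot\frac{\rho}{1+\rho}\ge 0$, with strict inequality for $\rho>0$, so $f$ is strictly increasing. Also $f''(\rho)=\frac{1}{2(1+\rho)^2}>0$, so $f$ is strictly convex, and $f(\rho)\to\infty$ as $\rho\to\infty$. By continuity and strict monotonicity, for every $\delta>0$ the equation $f(u_\delta)=\delta/n$ has a unique solution $u_\delta>0$, which justifies the definition in the statement. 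The inverse $f^{-1}:[0,\infty)\to[0,\infty)$ is then well defined and strictly increasing.

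Next I will apply Jensen's inequality. Given $\{\rho_t\}$ with $\sum_t f(\rho_t)\le\delta$, set $\bar\rho=\frac1n\sum_t\rho_t$. Convexity gives
\begin{equation*}
f(\bar\rho)\le \frac1n\sum_{t=1}^n f(\rho_t)\le \frac{\delta}{n}=f(u_\delta).
\end{equation*}
Since $f$ is strictly increasing, this implies $\bar\rho\le u_\delta$, that is, $\sum_t\rho_t\le n u_\delta$.

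For the equality case, the "if" direction is immediate: taking $\rho_t\equiv u_\delta$ gives $\sum_t f(\rho_t)=\delta$ and $\sum_t\rho_t=nu_\delta$. For "only if", suppose $\sum_t\rho_t=nu_\delta$. Then $\bar\rho=u_\delta$, so $f(\bar\rho)=\delta/n$ and both inequalities in the chain above are equalities. Equality in Jensen's inequality for a strictly convex $f$ forces all $\rho_t$ to equal $\bar\rho=u_\delta$.

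The only point needing care is this equality case, where strict convexity (from $f''>0$) has to be invoked explicitly. The rest is routine. If one prefers to avoid Jensen, the same result follows from the tangent-line bound
\begin{equation*}
f(\rho)\ge f(u_\delta)+f'(u_\delta)(\rho-u_\delta),
\end{equation*}
which is strict for $\rho\neq u_\delta$. Summing over $t$ and using $f'(u_\delta)>0$ gives both the inequality and the equality characterization at once.
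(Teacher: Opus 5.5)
Your proof is correct and follows the paper's argument essentially step for step: strict convexity and monotonicity of $f$, Jensen's inequality applied to $\bar\rho$, and strict convexity to force $\rho_t \equiv u_\delta$ in the equality case. Your added checks (existence and uniqueness of $u_\delta$, the explicit ``if'' direction) and the tangent-line variant are valid but do not change the route.
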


\begin{proof}
Since $f''(\rho)=\frac{1}{2(1+\rho)^2}>0$ $(\rho\ge 0)$, the function $f$ is strictly convex on $[0,\infty)$. Let
\[
\bar\rho \triangleq \frac1n\sum_{t=1}^n \rho_t.
\]
By Jensen’s inequality,
\[
f(\bar\rho)\le \frac1n\sum_{t=1}^n f(\rho_t)\le \frac{\delta}{n}.
\]
Also,
\[
f'(\rho)=\frac{\rho}{2(1+\rho)}\ge 0 \qquad (\rho\ge 0),
\]
with strict inequality for $\rho>0$, so $f$ is increasing on $[0,\infty)$. Hence $\bar\rho \le u_\delta$, where $u_\delta$ is the unique solution of $f(u_\delta)=\delta/n$. Therefore,
\[
\sum_{t=1}^n \rho_t = n\bar\rho \le n u_\delta.
\]

If equality holds, then necessarily $\bar\rho=u_\delta$, hence
$f(\bar\rho)=\delta/n$, so equality must hold in Jensen’s inequality. Since $f$
is strictly convex, equality in Jensen occurs if and only if
$\rho_t\equiv \bar\rho$ for all $t$, i.e., $\rho_t\equiv u_\delta$. The converse follows.
\end{proof}

\subsection{Exact Equalizer (Uniform Schedule)}
\begin{lemma}[Equalizer exists, is unique, and meets the relative-entropy budget]\label{lem:equalizer}
When $P_t \equiv P_n$, define $\rho_n \triangleq P_n/\sigma_0^2$. For any $\delta > 0$ there is a unique $\rho_n > 0$ solving
\begin{equation}\label{eq:equalizer}
\frac{n}{2}\bigl(\rho_n - \ln(1+\rho_n)\bigr) = \delta,
\end{equation}
and the uniform choice $P_n = \sigma_0^2 \rho_n$ meets \eqref{eq:KL-post}  with equality (uniformly over $s \in S_{\mathrm{adm}}$ in the stationary case). Moreover, $\rho_n = 2\sqrt{\delta/n} + O(\delta/n)$.
\end{lemma}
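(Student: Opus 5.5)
The argument has three parts: existence and uniqueness of the root, the budget identity, and the asymptotic expansion.

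\emph{Existence and uniqueness.} Set $F(\rho)\triangleq \frac{n}{2}\bigl(\rho-\ln(1+\rho)\bigr)=n f(\rho)$ on $[0,\infty)$. From the proof of Lemma~\ref{lem:uniform-opt}, $f$ is continuous with $f(0)=0$ and $f'(\rho)=\rho/(2(1+\rho))>0$ for $\rho>0$, so $F$ is strictly increasing. Moreover $F(\rho)\ge \frac{n}{2}(\rho-\sqrt{\rho})\to\infty$, using $\ln(1+\rho)\le\sqrt\rho$; alternatively, use the lower bound in \eqref{eq:KLallt}, $F(\rho)\ge n\rho^2/(4(1+\rho))\to\infty$. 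By the intermediate value theorem, for every $\delta>0$ there is exactly one $\rho_n>0$ with $F(\rho_n)=\delta$. This $\rho_n$ equals $u_\delta$ from Lemma~\ref{lem:uniform-opt}.

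\emph{Budget met with equality.} Take $P_n=\sigma_0^2\rho_n$ and i.i.d.\ $X_{c,t}\sim\mathcal N(0,P_n)$, independent of $N^n(s)$. By (A3) in the stationary case, conditioned on $S=s$ we have $R_0^n\sim\mathcal N(0,\sigma_0^2I_n)$ and $R_1^n\sim\mathcal N(0,(\sigma_0^2+P_n)I_n)$. Both are product laws, so \eqref{eq:KL-sum} with $\rho_t\equiv\rho_n$ gives
\[
D\bigl(P_{R_1^n|S=s}\,\big\|\,P_{R_0^n|S=s}\bigr)=n f(\rho_n)=\delta .
\]
Since $\sigma_0^2$ does not depend on $s$, this holds uniformly over $s\in\mathcal S_{\mathrm{adm}}$.

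\emph{Asymptotics.} The bounds in \eqref{eq:KLallt} bracket $f$ between $\rho^2/(4(1+\rho))$ and $\rho^2/4$.
\begin{itemize}
\item From the upper bound, $\delta/n=f(\rho_n)\le\rho_n^2/4$, so $\rho_n\ge 2\sqrt{\delta/n}$.
\item From the lower bound, $\rho_n^2\le 4(\delta/n)(1+\rho_n)$. Solving this quadratic shows $\rho_n\to0$, and in fact $\rho_n\le 2\sqrt{\delta/n}\,(1+\rho_n)^{1/2}\le 2\sqrt{\delta/n}+O(\delta/n)$.
\end{itemize}
For the sharper form, write $\varepsilon=\delta/n$. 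Taylor's theorem with remainder gives $f(\rho)=\rho^2/4-\rho^3/6+O(\rho^4)$. Substituting $\rho_n=2\sqrt\varepsilon+\eta$ and equating to $\varepsilon$ yields $\sqrt\varepsilon\,\eta-\frac{4}{3}\varepsilon^{3/2}+O(\eta^2+\varepsilon^2)=0$, so $\eta=O(\varepsilon)$; more precisely $\eta=\frac43\varepsilon+o(\varepsilon)$. Hence $\rho_n=2\sqrt{\delta/n}+O(\delta/n)$.

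\emph{Main obstacle.} The only delicate point is controlling the error term uniformly for a fixed $\delta$ as $n\to\infty$. I would handle it with the explicit sandwich above, which gives
\[
2\sqrt{\varepsilon}\;\le\;\rho_n\;\le\;2\varepsilon+2\sqrt{\varepsilon+\varepsilon^2},
\]
the upper bound being the positive root of $\rho^2-4\varepsilon\rho-4\varepsilon=0$. The difference between the two bounds is $O(\varepsilon)$ without needing any Taylor remainder.
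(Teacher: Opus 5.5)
Your proof is correct. The first two parts match the paper's: strict monotonicity of $nf$ with the intermediate value theorem, then product additivity for the i.i.d.\ Gaussian residuals with $\sigma_0^2$ independent of $s$.

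The difference is in the expansion of $\rho_n$. The paper argues $\rho_n\to 0$ and Taylor-expands $\ln(1+\rho)$ to get $\delta=\frac n4\rho_n^2+O(n\rho_n^3)$. It then bootstraps $\rho_n=O(n^{-1/2})$ back into the remainder and takes square roots. You instead invert the two-sided bound \eqref{eq:KLallt}, which gives the explicit sandwich $2\sqrt{\varepsilon}\le\rho_n\le 2\varepsilon+2\sqrt{\varepsilon+\varepsilon^2}$. Its width is at most $2\varepsilon+\varepsilon^{3/2}$.

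Your route buys three things:
\begin{itemize}
\item It needs neither an a priori $\rho_n\to 0$ nor any remainder bootstrapping.
\item It gives an error bound valid for every $n$ and $\delta$, so the $O(\delta/n)$ is uniform in $\delta$ rather than an $O(1/n)$ with a $\delta$-dependent constant.
\item Its lower half, $\rho_n\ge 2\sqrt{\delta/n}$, shows directly by monotonicity of $f$ that the closed-form design \eqref{eq:achi} satisfies the budget. The achievability section relies on this, and the paper's Taylor argument does not supply it.
\end{itemize}

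The paper's Taylor route is the more generic one for extracting higher-order terms. Your optional refinement $\eta=\frac43\varepsilon+o(\varepsilon)$ is correct. However, deducing $\eta=O(\varepsilon)$ from the Taylor identity alone requires the a priori bound $\eta=o(\sqrt{\varepsilon})$, which your sandwich supplies.
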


\begin{proof}
Define $g(\rho)\triangleq \frac{n}{2}\bigl(\rho-\ln(1+\rho)\bigr),  \rho\ge 0$. Then
\[
\begin{aligned}
g(0) &= 0,\\[2pt]
g'(\rho) &= \frac{n}{2}\frac{\rho}{1+\rho}\ge 0,\\[2pt]
g''(\rho) &= \frac{n}{2}\frac{1}{(1+\rho)^2}>0.
\end{aligned}
\]
Hence $g$ is continuous and strictly increasing on $[0,\infty)$, with $g(\rho)\to\infty$ as $\rho\to\infty$. Therefore, by the intermediate value theorem, for every $\delta>0$ there exists a unique $\rho_n>0$ such that $g(\rho_n)=\delta$, which is exactly \eqref{eq:equalizer}.

Under the conditioned i.i.d.\ residual model in the stationary case, the
uniform schedule $P_t\equiv P_n$ gives
\[
D\!\left(
P_{R_1^n\mid S=s}
\,\middle\|\,
P_{R_0^n\mid S=s}
\right)
=
\frac{n}{2}\bigl(\rho_n-\ln(1+\rho_n)\bigr).
\]
Thus, choosing $P_n=\sigma_0^2\rho_n$ with $\rho_n$ satisfying \eqref{eq:equalizer} meets \eqref{eq:KL-post} with equality.

For the asymptotics, since $\delta$ is fixed and $g(\rho_n)=\delta$, one has
$\rho_n\to 0$ as $n\to\infty$. Using $\ln(1+\rho)=\rho-\frac{\rho^2}{2}+O(\rho^3), \rho\to 0$, we obtain $\delta=\frac{n}{2}\bigl(\rho_n-\ln(1+\rho_n)\bigr)=\frac{n}{4}\rho_n^2+O(n\rho_n^3)$.

Hence $\rho_n=O(n^{-1/2})$, and substituting this back into the remainder term
yields $\delta = \frac{n}{4}\rho_n^2 + O(n^{-1/2})$, which implies $\rho_n^2 = \frac{4\delta}{n}+O(n^{-3/2})$.

Taking square roots gives $\rho_n = 2\sqrt{\delta/n}+O(\delta/n)$.
\end{proof}

By Lemma~\ref{lem:uniform-opt}, the uniform $\rho_n$ also maximizes the total covert power $\sum_{t=1}^n P_t$ among all schedules with the same relative-entropy budget.

\subsection{Closed-Form Design}
Using $\rho_n = 2\sqrt{\delta/n} + O(\delta/n)$ in $P_n = \sigma_0^2 \rho_n$ gives the closed-form first-order uniform schedule
\begin{equation}\label{eq:achi}
P_n = 2\sigma_0^2\sqrt{\delta/n}.
\end{equation}
This closed-form choice has the same first-order expansion as the exact equalizer \(P_n=\sigma_0^2\rho_n\). This design rule is used in Section~\ref{sec:Achievability}.

\subsection{Heterogeneous Innovation Variances: The $\sigma^4$ Scheduler} \label{sec:scheduler}

We now relax the stationary innovation assumption and allow the conditioned innovation-domain null variance to vary across channel uses. Specifically, conditioned on $S=s$, let $R_{0,t}\mid S=s \sim \mathcal N(0,\sigma_{0,t}^2)$, and, with independent Gaussian covert symbols $X_{c,t}\sim\mathcal N(0,P_t)$, let  $R_{1,t}\mid S=s \sim \mathcal N(0,\sigma_{0,t}^2+P_t)$.

Set $\rho_t \triangleq P_t/\sigma_{0,t}^2$. Assume \(\max_t \rho_t \to 0\). Under the conditioned product-law residual model, namely, when the induced residual samples are independent across \(t\) after conditioning on \(S=s\), the conditioned block relative entropy is
\begin{equation}
D\!\left(
P_{R_1^n\mid S=s}
\,\middle\|\,
P_{R_0^n\mid S=s}
\right)
=
\sum_{t=1}^n \frac12\left(\rho_t-\ln(1+\rho_t)\right).
\end{equation}

Using the small-signal expansion uniformly in \(t\),
\begin{equation}
D\!\left(
P_{R_1^n\mid S=s}
\,\middle\|\,
P_{R_0^n\mid S=s}
\right)
=
\frac{1+o(1)}{4}
\sum_{t=1}^n \frac{P_t^2}{\sigma_{0,t}^4}.
\end{equation}

Hence, to first order, the conditioned relative-entropy budget \(D\le \delta\) induces the ellipsoidal constraint
\begin{equation}
\sum_{t=1}^n \frac{P_t^2}{4\sigma_{0,t}^4}\le \delta.
\end{equation}

To maximize Bob’s first-order throughput, it is enough to maximize \(\sum_{t=1}^n P_t\) subject to the above constraint. By the Cauchy--Schwarz inequality,
\begin{equation}
\begin{aligned}
\left(\sum_{t=1}^n P_t\right)^2
&= \left(\sum_{t=1}^n \frac{P_t}{2\sigma_{0,t}^2}\, 2\sigma_{0,t}^2\right)^2 \\
&\le \left(\sum_{t=1}^n \frac{P_t^2}{4\sigma_{0,t}^4}\right)
\left(4\sum_{t=1}^n \sigma_{0,t}^4\right) \\
&\le 4\delta \sum_{t=1}^n \sigma_{0,t}^4.
\end{aligned}
\label{eq:heterogeneous-cs-bound}
\end{equation}

The first inequality in \eqref{eq:heterogeneous-cs-bound} is tight if and only if the two sequences
\[
\left\{\frac{P_t}{2\sigma_{0,t}^2}\right\}_{t=1}^n
\quad\text{and}\quad
\left\{2\sigma_{0,t}^2\right\}_{t=1}^n
\]
are linearly dependent. Equivalently, there exists a constant \(c\ge 0\) such that $P_t = c\,\sigma_{0,t}^4, t=1,\ldots,n$. Since the first-order objective \(\sum_{t=1}^n P_t\) is increasing in \(c\), the optimal schedule uses the relative-entropy budget with equality. Hence
\[
\sum_{t=1}^n \frac{c^2\sigma_{0,t}^8}{4\sigma_{0,t}^4}
=
\frac{c^2}{4}\sum_{t=1}^n \sigma_{0,t}^4
=
\delta,
\]
which gives
\[
c=
\frac{2\sqrt{\delta}}
{\sqrt{\sum_{j=1}^n\sigma_{0,j}^4}}.
\]
Therefore the first-order optimal schedule is
\begin{equation}
P_t^\star
=
\frac{2\sqrt{\delta}\,\sigma_{0,t}^4}
{\sqrt{\sum_{j=1}^n \sigma_{0,j}^4}},
\end{equation}
and the corresponding first-order total covert power is
\begin{equation}
\sum_{t=1}^n P_t^\star
=
2\sqrt{\delta}\,
\sqrt{\sum_{t=1}^n \sigma_{0,t}^4}.
\end{equation}

At Bob, under \(\max_t P_t/\sigma_B^2 \to 0\), the first-order upper bound is
\begin{equation}
B_n \le
\frac{1}{2 \ln 2}
\sum_{t=1}^n
\ln\left(1+\frac{P_t}{\sigma_B^2}\right)
\le
\frac{1+o(1)}{2\sigma_B^2 \ln 2}\sum_{t=1}^n P_t.
\end{equation}
Thus, under the optimal schedule,
\begin{equation}
B_n
=
\frac{1}{\sigma_B^2\ln 2}
\sqrt{\delta\sum_{t=1}^n \sigma_{0,t}^4}\,(1+o(1)).
\end{equation}

Thus, in the heterogeneous innovation case, the conditioned relative-entropy budget yields a first-order optimal \(\sigma^4\)-type scheduler \(P_t^\star \propto \sigma_{0,t}^4\), and, under the same conditioned product-law residual model used above, the same Cauchy–Schwarz argument gives the matching first-order converse.

\subsection{Pre-Residualization Conservative Design (Optional Robustness)}\label{sec:residualization}
As a robustness alternative, one may impose covertness before residualization on Willie’s full pre-test information. Unlike the main relative-entropy constraint in \eqref{eq:KL-post}, the present robustness alternative is imposed on the joint law of Willie’s full pre-test information $(W^n,S)$ under each hypothesis:
\begin{equation}
D\!\left(P_{W_1^n,S}\,\middle\|\,P_{W_0^n,S}\right)
\le \delta.
\end{equation}
Since $(R^n,S)$ is a deterministic function of $(W^n,S)$ via $T(w^n,s) \triangleq \bigl(w^n-g(w^n,s),\, s\bigr)$, data processing gives
\begin{equation}
D\!\left(P_{W_1^n,S}\,\middle\|\,P_{W_0^n,S}\right)
\ge
D\!\left(P_{R_1^n,S}\,\middle\|\,P_{R_0^n,S}\right).
\end{equation}
Hence the pre-test constraint is conservative relative to the innovation-residual constraint in \eqref{eq:KL-post}, when both are based on the same Willie-available information. In the stationary Gaussian baseline, replacing $\sigma_0^2$ by $\sigma_W^2$ recovers the baseline SRL constant $\bar R_C \triangleq \frac{\sigma_W^2}{\sigma_B^2\ln 2}\sqrt{\delta}$, which corresponds to the case without the residual-floor boost.

\section{Interpretation of the Innovation Regime and Boundary to Linear-Law Settings}\label{sec:regimes}

This section explains how the innovation-domain null variance should be understood physically, gives representative practical examples consistent with the assumptions in Section \ref{sec:Model}, and clarifies the boundary between the SRL/innovation regime studied here and out-of-scope linear-law/effective-secrecy regimes.

\subsection{Residual Floor as Cancellation-Limited Innovation Variance} \label{sec:variance}
Under the assumptions already stated in Section \ref{sec:Model}, Willie’s innovation-domain null, conditioned on $S=s$, can be interpreted as $R_{0,t}=N_t(s), \qquad N_t(s)\sim \mathcal N(0,\sigma_R^2(s))$, where $\sigma_R^2(s)=\sigma_W^2+\sigma_e^2(s)$ and $\sigma_e^2(s)\ge 0$ represents the irreducible cancellation residue after warden-favorable subtraction. Under covert transmission, the corresponding residual model is
$R_{1,t}=X_{c,t}+N_t(s)$. In the stationary case, $\sigma_e^2(s)\equiv \sigma_e^2>0$, so $\sigma_0^2=\sigma_W^2+\sigma_e^2>\sigma_W^2$ enters all relative-entropy expressions in Section \ref{sec:kl-equalizer} and beyond.

The role of $\sigma_e^2$ is purely in the innovation domain: it quantifies the residual noise variance that remains after Willie exploits $S$ and applies the best block-level cancellation of the aggregate public component. Typical physical contributors include finite cancellation depth, estimation residue when pilot symbols are used, hardware distortion such as phase noise and nonlinearities, and quantization or front-end noise. In particular, $\sigma_e^2$ is not modeled as a designable entropy source or hidden resource for the covert encoder; it is a residual property of Willie’s own innovation-domain null after warden-favorable cancellation under the stated regime
restrictions.

\subsection{Practical Examples Consistent With (A1)--(A4)}
\label{sec:anchors}

Two representative examples illustrate settings in which (A1)--(A4) are plausible. In both cases, after conditioning on \(S=s\), Willie’s post-cancellation residual is modeled as
\begin{equation}
\begin{aligned}
R_{0,t} &= N_t(s),\\
R_{1,t} &= X_{c,t}+N_t(s),\\
N_t(s) &\sim \mathcal N(0,\sigma_R^2(s)).
\end{aligned}
\end{equation}
Hence, under \(H_0\), \(R_{0,t}\sim\mathcal N(0,\sigma_R^2(s))\), while under \(H_1\), for \(X_{c,t}\sim\mathcal N(0,P_t)\), \(R_{1,t}\sim\mathcal N(0,\sigma_R^2(s)+P_t)\).

\paragraph{Known receiver-local component} A known receiver-local component may arise from receiver operation, calibration, or a locally available reference signal. Its realization and associated calibration information are included in \(S\). After subtracting this component and applying the overt-component cancellation allowed by (A2), \(N_t(s)\) collects receiver noise and any remaining local- or overt-cancellation residue.

\paragraph{Trackable public waveform(s) + pilots} The public waveform(s) and pilots/preamble are standardized, publicly specified, or otherwise trackable. Willie uses them to estimate timing and channel state and subtract the overt component. Here, \(N_t(s)\) includes the remaining tracking, estimation, and hardware-mismatch residue.

Both examples fall within the present SRL/innovation regime only when the resulting residual satisfies the known Gaussian innovation condition in (A3); in the stationary case, \(\sigma_R^2(s)\equiv\sigma_0^2\). Otherwise, the failure mode described next applies.

\subsection{Failure Mode: Linear-Law/Effective-Secrecy Regimes (Out of Scope)}
If (A3) fails because Willie is left with non-vanishing uncertainty about the post-cancellation residual variance, or because the residual is not additive Gaussian after conditioning, then the asymptotic behavior need not be governed by the SRL covert overlay alone. Examples include deep fading that is not trackable from \(S\), unknown interference, or unknown received power. In such cases, mechanisms associated with noise uncertainty or effective secrecy may dominate, potentially leading to linear-law behavior rather than the SRL regime studied here.

This regime is conceptually closer to the effective-secrecy formulations \cite{Hou2014,Hou2017}, where stealth is formulated by requiring Willie’s observation law under communication to remain close to a default law, and where confusion and stealth are analyzed jointly through a relative-entropy criterion.

The present paper is not an effective-secrecy analysis. Equation \eqref{eq:KL-post} imposes only a stealth-style detectability constraint on Willie’s post-cancellation innovation residual under the SRL restrictions (A1)–(A4); it does not add a confusion requirement and does not model the overt layer as a secrecy-bearing message. Accordingly, Sections~\ref{sec:kl-equalizer}--\ref{sec:main} claim SRL scaling only within the innovation regime defined in Section \ref{sec:problem}, and not in linear-law/effective-secrecy settings.

\section{Achievability of the Covert Component}\label{sec:Achievability}

This section gives an achievability argument for the covert component $X_c^n$ under the SRL/innovation regime of Section~\ref{sec:regimes}. The scheme meets the conditioned post-subtraction relative-entropy budget via the equalizer power allocation \eqref{eq:achi} and achieves the first-order constant $R_C^\star$ in \eqref{eq:main-constant}.

\subsection{Covert Layer From the Relative-Entropy Budget}
In the stationary innovation case, conditioned on any $S=s\in\mathcal S_{\mathrm{adm}}$, the induced residual satisfies $R_{0,t}\mid S=s \sim \mathcal N(0,\sigma_0^2)$, while, with \(X_{c,t}\sim \mathcal N(0,P_n)\) i.i.d.,
$R_{1,t}\mid S=s \sim \mathcal N(0,\sigma_0^2+P_n)$. Generate covert symbols i.i.d.\ $\mathcal N(0,P_n)$, independent of all else. With the uniform schedule from \eqref{eq:achi}, Section \ref{sec:kl-equalizer} guarantees $D(P_{R^n_1|S=s}\|P_{R^n_0|S=s}) \le \delta,\quad \forall s \in \mathcal{S}_{\rm adm}$.

\subsection{Decoding at Bob and Reliability}
Under the Bob-side removability assumption stated in Section~\ref{sec:Model}, Bob can recover, track, estimate, or otherwise remove the aggregate public component affecting his observation, using the protocol structure together with his own receiver observations and standard receiver estimates, with a residual negligible at the $\sqrt{n}$ scaling of interest. Therefore, the throughput counted in $B_n$ is carried only by the covert component $X_c^n$, and Bob’s effective observation for the covert component is $Y_{c,t}=X_{c,t}+Z_{B,t}$, i.e., an AWGN observation with noise variance $\sigma_B^2$.  For i.i.d.\ Gaussian codebooks and maximum-likelihood (ML) decoding over the effective AWGN channel, the scheme achieves
\begin{align*}
B_n=\frac{n}{2\ln 2}\ln\!\left(1+\frac{P_n}{\sigma_B^2}\right)\,(1+o(1)).
\end{align*}
Substituting ~\eqref{eq:achi} and using $\ln(1+x)=x+O(x^2)$ as $x\to 0$ yields
\begin{align*}
B_n=\frac{\sigma_0^2}{\sigma_B^2\ln 2}\sqrt{n\,\delta}\,(1+o(1))
=R_C^\star \sqrt{n}\,(1+o(1)),
\end{align*}
matching \eqref{eq:main-constant}.

\section{Converse: No Scheme Beats $R_C^\star$}\label{sec:Converse}

We upper-bound the maximal covert bits \(B_n\) under the conditioned post-subtraction relative-entropy budget \eqref{eq:KL-post} in the stationary specialization of (A3), where Willie’s null residual is Gaussian after conditioning. This specifies the null law tested by Willie; it does not impose a Gaussian or product structure on the covert codeword. Conditioned on any admissible \(S=s\), $R_0^n|S=s=N^n(s), \qquad R_1^n|S=s=X_c^n+N^n(s)$, where \(N^n(s)\sim\mathcal{N}(0,\sigma_0^2 I_n)\) is independent of \(X_c^n\). The codeword distribution itself may be arbitrary, including sparse or on-off signaling, provided it has zero mean, per-use variances \(P_t=\operatorname{Var}(X_{c,t})\), and satisfies the vanishing-power condition. The key step is Lemma~\ref{lem:Gaussian}, which shows that, for a fixed coordinate-variance profile, the Gaussian variance-increase residual is least detectable relative to the Gaussian null law. The rest of the argument converts the relative-entropy budget into a quadratic \(\ell_2\)-constraint on \(\{P_t\}\), then uses the AWGN information bound at Bob and Cauchy--Schwarz to obtain the first-order constant \(R_C^\star\).

Let $P_t = \mathrm{Var}(X_{c,t})$ and $\rho_t = P_t/\sigma_0^2$. We work under the vanishing-power regime $\max_t \rho_t \to 0$.

\begin{lemma}[Gaussian variance increase is least detectable]\label{lem:Gaussian}
Let $Q_0=\mathcal N(0,\sigma_0^2 I_n)$ with \(\sigma_0^2>0\). Let \(R^n\) be any zero-mean random vector with finite second moments and coordinate variances $\operatorname{Var}(R_t)=\sigma_0^2+P_t,\qquad t=1,\ldots,n,$ where \(P_t\ge 0\). Define \(\rho_t=P_t/\sigma_0^2\). Then $D(P_{R^n}\|Q_0)\ge\frac12\sum_{t=1}^n\left(\rho_t-\ln(1+\rho_t)\right)$. Equality is attained when the coordinates of \(R^n\) are independent Gaussian random variables with $R_t\sim \mathcal N(0,\sigma_0^2+P_t)$.
\end{lemma}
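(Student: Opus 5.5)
The plan is to write the relative entropy against the Gaussian reference $Q_0$ as a cross-entropy minus a differential entropy, and then bound the entropy using Gaussian maximum-entropy properties. Only the coordinate second moments of $R^n$ enter the cross-entropy term.

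First, dispose of the degenerate case. If $P_{R^n}$ is not absolutely continuous with respect to Lebesgue measure (equivalently, with respect to $Q_0$), then $D(P_{R^n}\|Q_0)=+\infty$ and there is nothing to prove. Otherwise $R^n$ has a density $p$. Its differential entropy is bounded above because the second moments are finite, and it is $-\infty$ only when the divergence is $+\infty$.

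Second, with $q_0(r)=(2\pi\sigma_0^2)^{-n/2}\exp(-\|r\|^2/(2\sigma_0^2))$, write
\[
D(P_{R^n}\|Q_0)=-h(R^n)+\mathbb E\bigl[-\ln q_0(R^n)\bigr].
\]
The cross-entropy term evaluates exactly, since $\mathbb E\|R^n\|^2=\sum_t(\sigma_0^2+P_t)$ by the zero-mean and variance hypotheses:
\[
\mathbb E\bigl[-\ln q_0(R^n)\bigr]=\frac n2\ln(2\pi\sigma_0^2)+\frac{1}{2\sigma_0^2}\sum_{t=1}^n(\sigma_0^2+P_t).
\]

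Third, bound the entropy. Subadditivity gives $h(R^n)\le\sum_t h(R_t)$. The scalar Gaussian maximum-entropy bound under a variance constraint gives
\[
h(R_t)\le \tfrac12\ln\bigl(2\pi e(\sigma_0^2+P_t)\bigr).
\]
Equivalently, this step is Hadamard's inequality applied to the covariance matrix together with the fact that $h$ is at most the Gaussian entropy with the same covariance.

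Substituting these into the decomposition, the $2\pi$ terms cancel. Each coordinate then contributes
\[
\tfrac12\bigl[(1+\rho_t)-1-\ln(1+\rho_t)\bigr]=\tfrac12\bigl(\rho_t-\ln(1+\rho_t)\bigr),
\]
which is the claimed lower bound.

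For the equality case, take independent $R_t\sim\mathcal N(0,\sigma_0^2+P_t)$. Both inequalities are then tight: independence gives equality in subadditivity, and Gaussianity gives equality in the maximum-entropy bound. Alternatively, this follows directly from \eqref{eq:KLconvert} and the product additivity \eqref{eq:KL-sum}.

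The only delicate step is the integrability justification of the decomposition. It requires $\mathbb E[-\ln q_0(R^n)]$ to be finite, which holds by the finite second moments. It also requires the convention that the identity holds in $[0,\infty]$ when $h(R^n)=-\infty$. I expect this bookkeeping to be the main obstacle; the inequalities themselves are routine.
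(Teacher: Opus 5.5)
Your proposal is correct and follows essentially the same route as the paper: the same cross-entropy-minus-differential-entropy decomposition of $D(P_{R^n}\|Q_0)$, with the cross-entropy term evaluated exactly from the second moments and the entropy bounded by Gaussian maximum entropy. The only difference is cosmetic: you bound $h(R^n)$ by subadditivity plus the scalar maximum-entropy bound, whereas the paper uses the vector bound $h(R^n)\le\frac12\ln((2\pi e)^n\det K_R)$ followed by Hadamard's inequality, which you already note is equivalent; your explicit handling of the $h(R^n)=-\infty$ case adds a little rigor the paper omits.
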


\begin{proof}
If \(P_{R^n}\) is not absolutely continuous with respect to \(Q_0\), then \(D(P_{R^n}\|Q_0)=+\infty\), and the claim is immediate. Otherwise,
\begin{align*}
D(P_{R^n}\|Q_0)
=
-h(R^n)
+\frac n2\ln(2\pi\sigma_0^2)
+\frac{1}{2\sigma_0^2}\mathbb E\|R^n\|^2 ,
\end{align*}
where $h(\cdot)$ denotes differential entropy. 
Let \(K_R\) be the covariance matrix of \(R^n\). Since Gaussian random vectors maximize differential entropy for a fixed covariance, $h(R^n)\le\frac12\ln\big((2\pi e)^n\det K_R\big)$. By Hadamard's inequality, $\det K_R\le\prod_{t=1}^n(\sigma_0^2+P_t)$. Also, $\mathbb E\|R^n\|^2 = \sum_{t=1}^n(\sigma_0^2+P_t)$. Substituting these bounds gives
\begin{align*}
D(P_{R^n}\|Q_0)
\ge
\frac12\sum_{t=1}^n
\left(
\frac{\sigma_0^2+P_t}{\sigma_0^2}
-1
-\ln\frac{\sigma_0^2+P_t}{\sigma_0^2}
\right),
\end{align*}
which is equivalent to
\begin{align*}
D(P_{R^n}\|Q_0)
\ge
\frac12\sum_{t=1}^n
\left(\rho_t-\ln(1+\rho_t)\right).
\end{align*}
Equality holds when \(R^n\) is Gaussian with diagonal covariance \(\operatorname{diag}(\sigma_0^2+P_1,\ldots,\sigma_0^2+P_n)\).
\end{proof}

\subsection{From Relative Entropy to an $\ell_2$ Power Budget}
Under Assumption (A3), for each admissible \(s\),
\begin{align*}
P_{R_0^n|S=s}=\mathcal N(0,\sigma_0^2I_n),
\qquad
R_1^n|S=s=X_c^n+N^n(s).
\end{align*}
The distribution of \(X_c^n\) may be arbitrary subject to the second-moment constraints \(P_t=\operatorname{Var}(X_{c,t})\). Applying
Lemma \ref{lem:Gaussian} to \(R^n=R_1^n|S=s\) and \(Q_0=P_{R_0^n|S=s}\) gives
\begin{align*}
D(P_{R_1^n|S=s}\|P_{R_0^n|S=s})
\ge
\frac12\sum_{t=1}^n
\left(\rho_t-\ln(1+\rho_t)\right).
\end{align*}
Using the lower bound
\begin{align*}
\frac12(\rho_t-\ln(1+\rho_t))
\ge
\frac{\rho_t^2}{4(1+\rho_t)}
\end{align*}
and \(\max_t \rho_t\to 0\), we obtain
\begin{align*}
D(P_{R_1^n|S=s}\|P_{R_0^n|S=s})
\ge
\frac{1-o(1)}{4}\sum_{t=1}^n \rho_t^2
=
\frac{1-o(1)}{4\sigma_0^4}
\sum_{t=1}^n P_t^2 .
\end{align*}
Since the constraint \eqref{eq:KL-post} enforces $D(P_{R_1^n|S=s}\|P_{R_0^n|S=s})\le \delta$ uniformly over \(s\in S_{\rm adm}\), it follows that 
\begin{equation}\label{eq:l2-budget}
\sum_{t=1}^n P_t^2 \le 4(1+o(1))\sigma_0^4\,\delta.
\end{equation}

\subsection{Number of Covert Bits Received by Bob}
For each use, $I(X_{c,t}; X_{c,t}+Z_{B,t})\le \frac12\ln(1+P_t/\sigma_B^2)$, where \(I(\cdot;\cdot)\) denotes mutual information. Hence
\begin{align*}
B_n\le \frac{1}{2\ln 2}\sum_{t=1}^n \ln\!\left(1+\frac{P_t}{\sigma_B^2}\right)
\le \frac{1}{2\ln 2\,\sigma_B^2}\sum_{t=1}^n P_t.
\end{align*}
By the Cauchy--Schwarz inequality and \eqref{eq:l2-budget},
\begin{align*}
\sum_{t=1}^n P_t \le \sqrt{n}\left(\sum_{t=1}^n P_t^2\right)^{1/2}
\le 2\sigma_0^2 \sqrt{n\,\delta}\,(1+o(1)).
\end{align*}
Therefore,
\begin{equation}
B_n \le \frac{\sigma_0^2}{\sigma_B^2\ln 2}\sqrt{n\,\delta}\,(1+o(1))
= R_C^\star \sqrt{n}\,(1+o(1)),
\end{equation}
which is the converse bound with the constant in~\eqref{eq:main-constant}.

\vspace{0.25em}
\section{First–Order Covert Throughput (Main Result)}\label{sec:main}

This section packages the ingredients developed so far into a single statement. Section~\ref{sec:kl-equalizer} converts the conditioned post-subtraction relative-entropy budget into the equalizer power allocation \eqref{eq:achi}. Section~\ref{sec:regimes} specifies the SRL/innovation regime (A1)--(A4) under which Willie’s innovation-domain null variance is known in the stationary case and strictly larger than $\sigma_W^2$ due to an irreducible cancellation residue $\sigma_e^2>0$. Combining the achievability in Section~\ref{sec:Achievability} and the converse in Section~\ref{sec:Converse} yields the exact $\sqrt{n}$ constant below.

\begin{theorem}[Optimal covert throughput under conditioned post--subtraction relative-entropy constraint]\label{thm:main}
Let the model be as in Section \ref{sec:problem} and suppose:
\begin{enumerate}[label={(A\arabic*)},ref={(A\arabic*)}]
\item[(T1)] \textbf{Covertness (innovation-domain, conditioned):} $D(P_{R_1^n\mid S=s}\|P_{R_0^n\mid S=s})\le \delta$ for fixed $\delta>0$ and all $s\in\mathcal S_{\mathrm{adm}}$. Here, $R_0^n$ and $R_1^n$ are the residual vectors induced by the same residualization map under $H_0$ and $H_1$, respectively.
\item[(T2)] \textbf{SRL/innovation regime restrictions:} the assumptions (A1)--(A4) in Section~\ref{sec:Model} hold; in particular, Willie performs warden-favorable cancellation using \(S\), and after conditioning on \(S\), the post-cancellation residual under \(H_0\) is  zero-mean Gaussian with known variance. Residual-variance uncertainty that remains non-negligible as \(n\) grows, as well as non-Gaussian post-cancellation residual laws, are therefore excluded from the achievability claim.
\item[(T3)] \textbf{Stationary innovation variance with a residual floor:} for all $s\in\mathcal S_{\mathrm{adm}}$, $\mathrm{Var}(R_{0,t}\,|\,S=s)=\sigma_0^2=\sigma_W^2+\sigma_e^2$ with $\sigma_e^2>0$.
\item[(T4)] \textbf{Vanishing covert power:} $\max_t P_t=o(\sigma_0^2)$.
\item[(T5)] \textbf{Arbitrary covert input:} conditioned on each admissible \(S=s\), the stationary specialization of (A3) holds, $R_0^n|S=s=N^n(s), \qquad R_1^n|S=s=X_c^n+N^n(s)$, with \(N^n(s)\sim\mathcal{N}(0,\sigma_0^2 I_n)\) independent of \(X_c^n\). The covert codeword distribution \(X_c^n\) may be arbitrary, including sparse or on-off signaling, subject to zero mean, per-use variances \(P_t\), and the vanishing-power condition.
\end{enumerate}

Then the maximal achievable covert message size satisfies
\begin{equation}\label{eq:main-throughput}
B_n^\star = R_C^\star \sqrt{n}\,(1+o(1)), \qquad n\to\infty,
\end{equation}
where the optimal first-order constant is
\begin{equation}\label{eq:main-constant}
R_C^\star =
\frac{\sigma_0^2}{\sigma_B^2\ln 2}\sqrt{\delta}.
\end{equation}
Moreover:
\begin{enumerate}[label={(B\arabic*)},ref={(B\arabic*)}]
\item[(B1)] \textbf{Achievability (attainment):} i.i.d.\ Gaussian covert symbols with the equalizer $P_n=2\sigma_0^2\sqrt{\delta/n}$ from~\eqref{eq:achi} achieve $B_n=R_C^\star\sqrt{n}(1+o(1))$ (Section~\ref{sec:Achievability}). \label{claim:ach}
\item[(B2)] \textbf{Converse:} under (T5), Gaussian residuals are least detectable for a fixed coordinate-variance profile. Therefore, any feasible scheme satisfying the additive Gaussian innovation model, including sparse or on-off signaling, obeys $\sum_{t=1}^n P_t^2\le4(1+o(1))\sigma_0^4\delta$ by Lemma~\ref{lem:Gaussian} and the argument in Section~\ref{sec:Converse}. Consequently, $B_n\le R_C^\star\sqrt n(1+o(1))$ by the Cauchy--Schwarz inequality and the AWGN information bound at Bob (Section~\ref{sec:Converse}). \label{claim:conv}
\item[(B3)] \textbf{Equality case / schedule structure:} among all power schedules meeting the relative-entropy budget, the uniform schedule is optimal under the relative-entropy budget for total power in the stationary case (Lemma~\ref{lem:uniform-opt} and Lemma~\ref{lem:equalizer}). \label{claim:eq}
\end{enumerate}

Equivalently, relative to the baseline constant $\bar R_C$ defined in Section~\ref{sec:residualization}, the stationary innovation-domain constant satisfies
\begin{equation}\label{eq:boost}
R_C^\star
=
\frac{\sigma_0^2}{\sigma_B^2\ln 2}\sqrt{\delta}
=
\left(1+\frac{\sigma_e^2}{\sigma_W^2}\right)\bar R_C,
\qquad
\sigma_0^2=\sigma_W^2+\sigma_e^2.
\end{equation}

\end{theorem}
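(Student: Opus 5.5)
The proof assembles the achievability of Section~\ref{sec:Achievability} and the converse of Section~\ref{sec:Converse} into matching first-order bounds on $B_n^\star$. The stationary Gaussian specialization (T3), (T5) makes both sections apply uniformly over $s\in\mathcal S_{\mathrm{adm}}$. The identity \eqref{eq:boost} is then a one-line algebraic rewriting: substitute $\sigma_0^2=\sigma_W^2+\sigma_e^2$ into \eqref{eq:main-constant} and compare with $\bar R_C$.

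For the lower bound (B1), fix $P_n=2\sigma_0^2\sqrt{\delta/n}$ and draw i.i.d.\ $\mathcal N(0,P_n)$ symbols independent of $N^n(s)$. Conditioned on $S=s$, the residuals are then product Gaussian, and by \eqref{eq:KL-sum} and the upper bound in \eqref{eq:KLallt}, $D\le n\rho^2/4=\delta$ with $\rho=P_n/\sigma_0^2$. This holds for every admissible $s$, so (T1) is satisfied; (T4) holds since $P_n\to0$.

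Reliability is the one place I would be careful, because the capacity-type expression $\frac{n}{2}\ln(1+P_n/\sigma_B^2)$ is only $\Theta(\sqrt n)$ nats and the SNR vanishes. A standard fixed-rate AWGN coding theorem does not directly apply. I would instead use a one-shot or information-density bound (e.g.\ Feinstein's lemma) with Gaussian codebooks. The information density $\imath(X^n;Y_c^n)$ has mean $\frac n2\ln(1+P_n/\sigma_B^2)\sim \sigma_0^2\sqrt{n\delta}/\sigma_B^2$ and variance of the same order $\Theta(n P_n)=\Theta(\sqrt n)$, so its standard deviation is $O(n^{1/4})=o(\sqrt n)$. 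Chebyshev then gives error $\to0$ at $B_n\ln 2=(1-\epsilon)\,\mathbb E[\imath]$ for any $\epsilon>0$, which yields $B_n=R_C^\star\sqrt n(1+o(1))$. The codebook's empirical KL should match the i.i.d.\ input law; I would assume, as the paper does, that Willie knows only the codebook distribution, i.e., that a secret key or random code averaging makes the induced residual law the i.i.d.\ Gaussian one.

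For the upper bound (B2), take any code satisfying (T1), (T4), (T5) with $P_e^{(n)}\to0$. Lemma~\ref{lem:Gaussian} applied at any admissible $s$, together with the lower bound in \eqref{eq:KLallt} and $\max_t\rho_t\to0$, gives the $\ell_2$ budget \eqref{eq:l2-budget}. At Bob, Fano's inequality gives $(1-P_e^{(n)})B_n\ln2-\ln 2\le I(M;Y_c^n)\le\sum_t I(X_{c,t};Y_{c,t})\le\sum_t\frac12\ln(1+P_t/\sigma_B^2)\le \sum_t P_t/(2\sigma_B^2)$. Using Fano makes explicit the $1+o(1)$ slack that the paper's displayed bound hides. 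Cauchy--Schwarz then yields $B_n\le R_C^\star\sqrt n(1+o(1))$. Combining with (B1) gives \eqref{eq:main-throughput}.

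Finally, (B3) is immediate from Lemma~\ref{lem:uniform-opt} and Lemma~\ref{lem:equalizer}. In the Cauchy--Schwarz step the upper bound is attained iff $P_t$ is constant, which is consistent with uniform optimality. The main obstacle is the reliability step in (B1), specifically ensuring that the second-order fluctuations of the information density are $o(\sqrt n)$ in the vanishing-SNR regime. The converse is routine given Lemma~\ref{lem:Gaussian}.
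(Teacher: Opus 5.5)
Your proposal is correct and follows the paper's proof. Like the paper, you combine the Section~\ref{sec:Achievability} achievability with the Section~\ref{sec:Converse} converse, which runs Lemma~\ref{lem:Gaussian}, then the $\ell_2$ budget, then the AWGN bound with Cauchy--Schwarz; you get (B3) from Lemmas~\ref{lem:uniform-opt} and \ref{lem:equalizer} and \eqref{eq:boost} by substitution. Your extra steps only make rigorous what the paper asserts without detail: the feasibility check $f(\rho)\le\rho^2/4$, the information-density/Chebyshev reliability argument at vanishing SNR, Fano's inequality in the converse, and the explicit secret-key caveat for Willie's induced residual law.
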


\begin{proof}
(B1) follows directly from the achievability construction in Section~\ref{sec:Achievability}: i.i.d.\ Gaussian covert signaling with the equalizer $P_n = 2\sigma_0^2\sqrt{\delta/n}$ satisfies the conditioned relative-entropy budget and achieves $B_n = R_C^\star \sqrt{n}(1+o(1))$. 
(B2) follows from Section~\ref{sec:Converse}. Lemma \ref{lem:Gaussian} shows that, among all residual distributions with the same coordinate variances, the independent Gaussian variance-increase law minimizes relative entropy to the Gaussian null law. Thus sparse, on-off, or otherwise non-Gaussian signaling cannot relax the relative-entropy budget. The budget therefore implies the quadratic power bound \(\sum_t P_t^2\le 4(1+o(1))\sigma_0^4\delta\), and the AWGN information bound at Bob together with Cauchy--Schwarz gives \(B_n\le R_C^\star\sqrt n(1+o(1))\).
(B3) follows from Lemma~\ref{lem:uniform-opt} and Lemma~\ref{lem:equalizer} in Section~\ref{sec:kl-equalizer}, which show that in the stationary case the uniform schedule is optimal for total covert power under the relative-entropy budget and that the equalizer exists, is unique, and meets the relative-entropy budget with equality. 
Combining (B1) and (B2) yields the claimed first-order characterization $B_n^\star = R_C^\star \sqrt{n}(1+o(1))$.
\end{proof}

\section{Numerical Results and Discussion}\label{sec:results}

We numerically verify the relative-entropy equalizer, the first-order SRL constant, finite-\(n\) convergence, the \(\sigma^4\) scheduler, and the tradeoff between post- and pre-residualization designs under the assumptions of Sections~\ref{sec:Model}--\ref{sec:main}.

\subsection{Relative-Entropy Equalization Across \(n\)}

Figure~\ref{fig:equalizer-vs-n} shows that the exact equalizer \(\rho_n\), obtained from \(\frac{n}{2}(\rho_n-\ln(1+\rho_n))=\delta\), approaches the closed-form approximation \(2\sqrt{\delta/n}\) as \(n\) grows.

\begin{figure}[t]
\centering
\includegraphics[width=.9\linewidth]{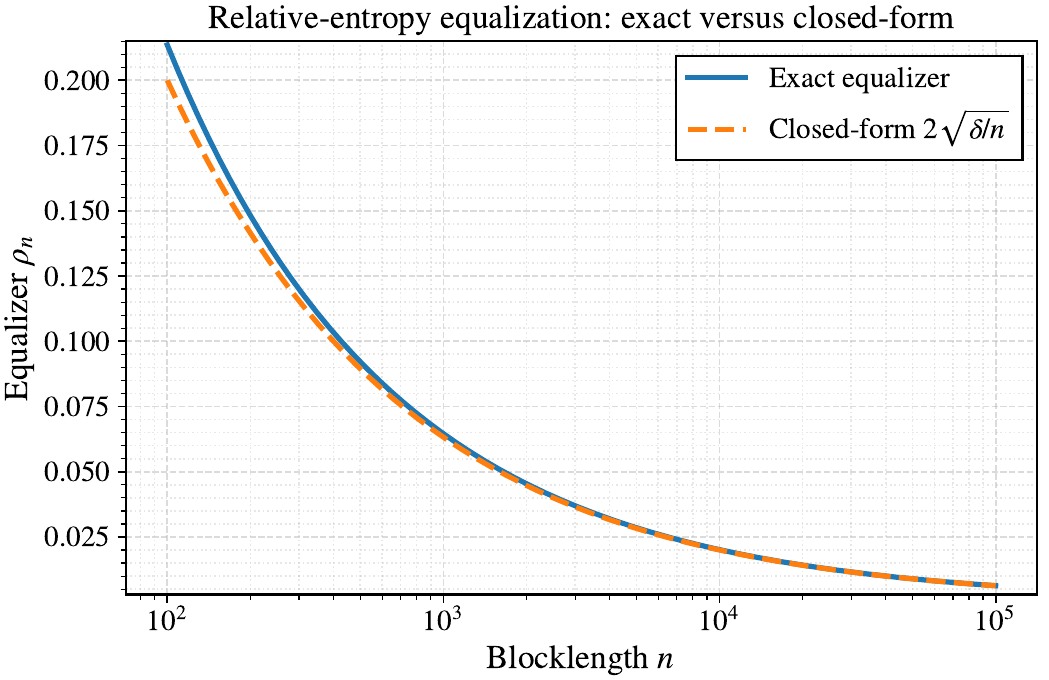}
\caption{Exact relative-entropy equalizer \(\rho_n\) versus \(2\sqrt{\delta/n}\) for \(\delta=1\). The two coincide for moderate \(n\), confirming the first-order accuracy of \eqref{eq:achi}.}
\label{fig:equalizer-vs-n}
\end{figure}

\subsection{First--Order Scaling in \(\delta\) and Constant Tightness}

Figure~\ref{fig:Bn-sqrt-n-vs-delta} confirms that both the exact and closed-form allocations approach the converse-tight first-order constant, with the small finite-\(n\) gap caused by higher-order terms.

\begin{figure}[t]
\centering
\includegraphics[width=.9\linewidth]{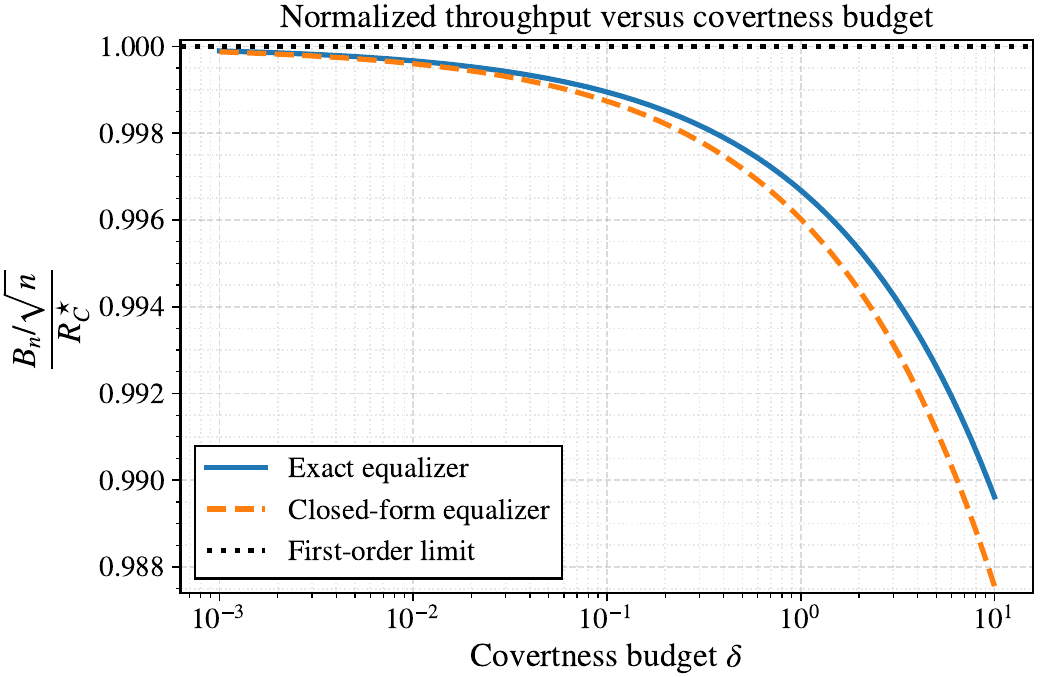}
\caption{Normalized throughput \((B_n/\sqrt{n})/R_C^\star\) versus \(\delta\) for \(n=10^6\), \(\sigma_B^2=1\), \(\sigma_W^2=2\), and \(\sigma_e^2=2\). Dotted: first-order limit; solid: exact equalizer; dashed: closed-form allocation.}
\label{fig:Bn-sqrt-n-vs-delta}
\end{figure}

\subsection{Finite--\(n\) Approach and Residual--Floor Boost}

Figure~\ref{fig:finite-n-approach} shows that increasing the residual floor raises the SRL constant while preserving the same asymptotic convergence behavior.

\begin{figure}[t]
\centering
\includegraphics[width=.9\linewidth]{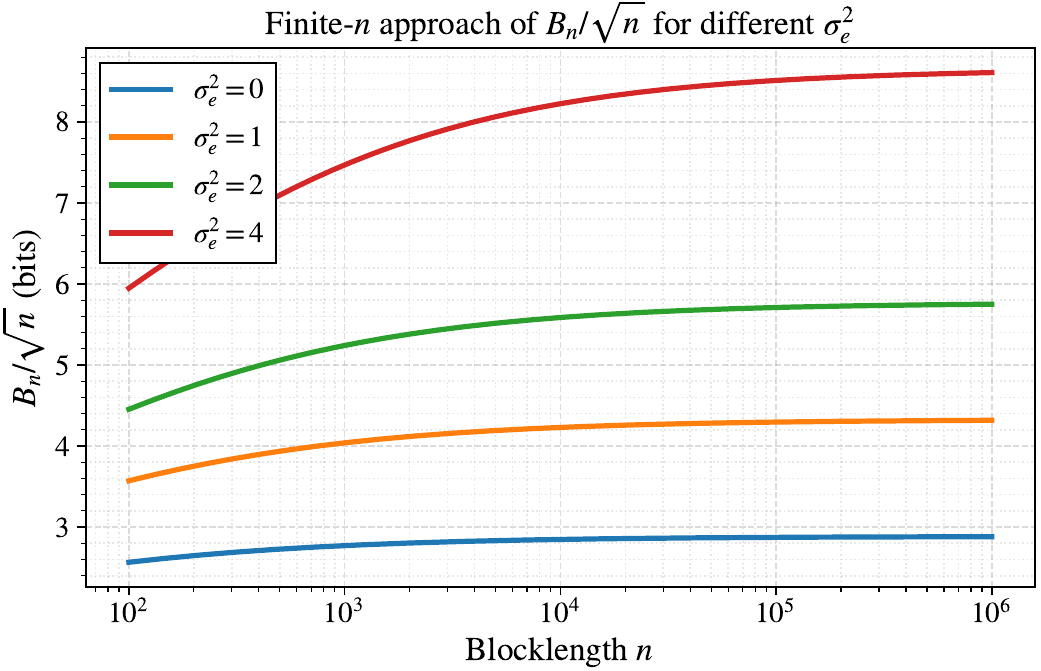}
\caption{Finite-\(n\) behavior of \(B_n/\sqrt{n}\) for several \(\sigma_e^2\), with \(\sigma_B^2=1\), \(\sigma_W^2=2\), and \(\delta=1\). Larger residual floors increase the limiting constant by \(1+\sigma_e^2/\sigma_W^2\).}
\label{fig:finite-n-approach}
\end{figure}

\subsection{Heterogeneous Innovation Variances: The \(\sigma^4\) Scheduler}

For time-varying conditioned innovation variances, the small-signal relative-entropy budget is $\sum_{t=1}^n \frac{P_t^2}{4\sigma_{0,t}^4}\lesssim \delta$. Maximizing the first-order total covert power under this constraint gives $P_t\propto\sigma_{0,t}^4$. For a known admissible variance profile consistent with (A3), Figure~\ref{fig:sigma4-schedule} shows that the gain over uniform allocation increases with heterogeneity and vanishes in the
stationary case.

\begin{figure}[t]
\centering
\includegraphics[width=.9\linewidth]{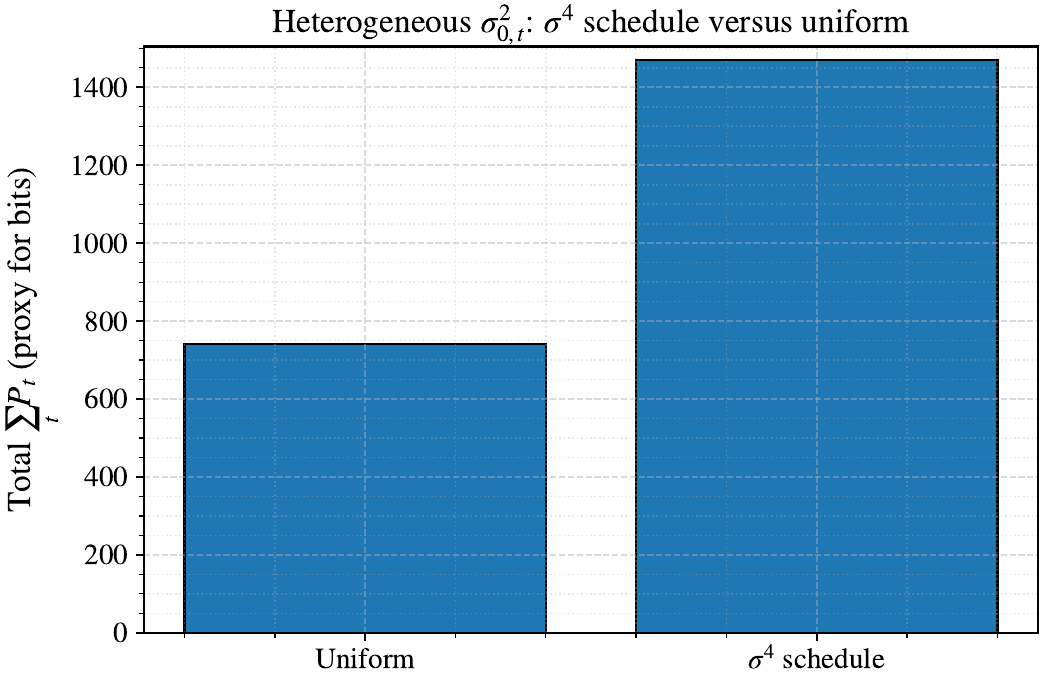}
\caption{\(\sigma^4\)-based versus uniform allocation for log-normal \(\{\sigma_{0,t}^2\}\), with \(n=3\times10^4\), \(\sigma_W^2=2\), \(\sigma_e^2=\sigma_W^2\), log-spread \(0.6\), and \(\delta=1\).}
\label{fig:sigma4-schedule}
\end{figure}

\subsection{Post- vs.\ Pre-Residualization Relative-Entropy Constraints}

Figure~\ref{fig:pre-vs-post} illustrates that the conservative pre-residualization design recovers only the baseline constant \(\bar R_C\), whereas post-residualization design exploits \(\sigma_0^2\) and attains \(R_C^\star\).

\begin{figure}[t]
\centering
\includegraphics[width=.9\linewidth]{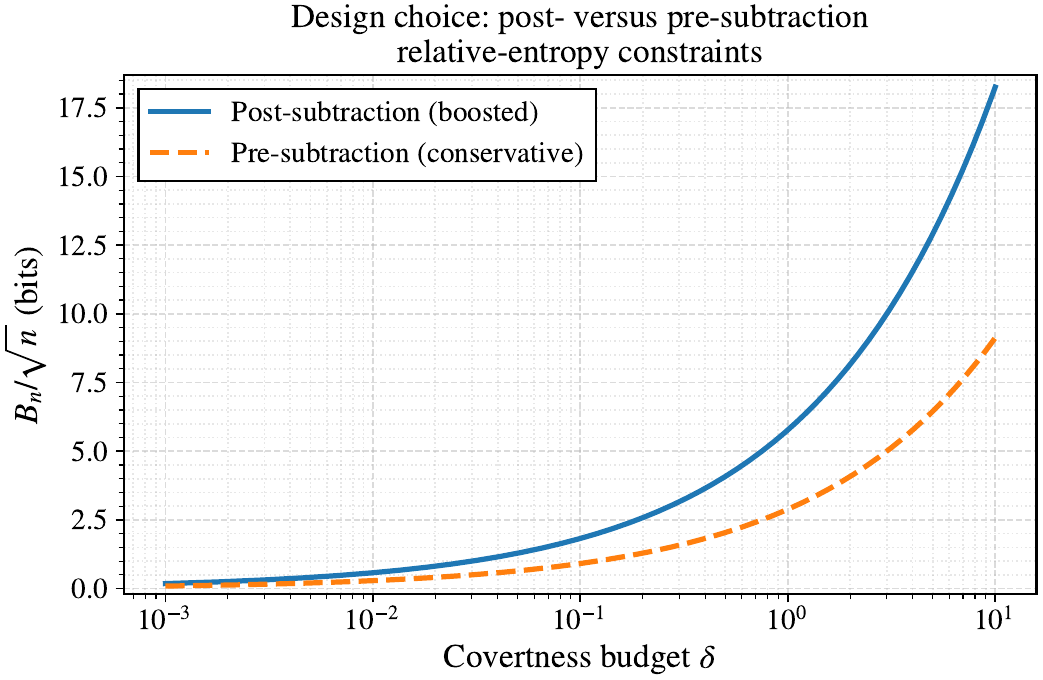}
\caption{Post- versus pre-residualization design for \(\sigma_B^2=1\), \(\sigma_W^2=2\), and \(\sigma_e^2=2\). The former attains \(R_C^\star\), while the conservative pre-residualization design attains \(\bar R_C\).}
\label{fig:pre-vs-post}
\end{figure}

\section{Future Directions}\label{sec:future}

The innovation-domain formulation and the SRL/linear-law boundary suggest several directions for future work.

\begin{enumerate}

\item \textbf{Adaptive innovation-domain design with estimated residual variance:} Several numerical results assume oracle knowledge of $\sigma_0^2$. A practical extension is an adaptive scheme that begins with the conservative pre-residualization constraint on the joint laws of $(W_i^n,S)$, $i\in\{0,1\}$, and approaches the innovation-domain design as pilot- and calibration-based confidence sets for $\sigma_R^2(s)$ tighten, while preserving the finite-$n$ guarantee $D\!\left(P_{R_1^n\mid S=s}\,\middle\|\,
P_{R_0^n\mid S=s}\right)\le\delta$ uniformly over admissible $s$.

\item \textbf{Origin and calibration of the innovation residual floor:} The parameter $\sigma_e^2$ represents the irreducible residue after warden-favorable cancellation. A practical objective is to estimate $\sigma_e^2$, or more generally $\sigma_R^2(s)$, with sufficient accuracy to preserve first-order SRL scaling. Pilot-based public-waveform cancellation, calibration measurements, and characterization of hardware, quantization, and tracking errors could provide direct evidence for the known Gaussian innovation condition in (A3).

\item \textbf{i.i.d.\ versus colored innovation residuals:} The present analysis relies on i.i.d.\ residuals. An extension would consider $R_0^n\sim\mathcal N(0,\Sigma_0),\qquad R_1^n\sim\mathcal N\!\left(0,\Sigma_0+\operatorname{diag}(P_t)\right)$, or low-rank covariance perturbations, together with eigenmode-aware power allocation and a converse under the same conditioned block relative-entropy constraint.

\item \textbf{Second-order characterization:} The main theorem is first-order tight, while the numerical results suggest an $O(n^{-1/2})$ convergence rate. A natural next step is a second-order analysis with explicit $O(1)$ terms and robustness penalties for online estimation of $\sigma_R^2(s)$.

\item \textbf{Operational characterization of the regime boundary:} An open problem is to express the SRL/linear-law boundary through measurable channel and protocol conditions. In particular, which properties of practical fading channels validate the known Gaussian innovation condition in (A3), and which residual-uncertainty mechanisms instead lead to linear-law or effective-secrecy regimes?

\end{enumerate}

\section{Conclusion}\label{sec:conclusion}
We studied covert communication in a warden-favorable innovation regime for an AWGN model with Bob and Willie observations. Willie uses all physically obtainable side information and performs block-level cancellation of the overt component before testing for
covert transmission. Under the conditioned post-subtraction relative-entropy constraint and the assumption that Willie’s post-cancellation residual is zero-mean Gaussian with known conditioned variance, the relevant null variance is
$\sigma_0^2=\sigma_W^2+\sigma_e^2$. In the stationary case, i.i.d.\ Gaussian covert signaling with $P_n=2\sigma_0^2\sqrt{\delta/n}$ achieves $B_n=R_C^\star\sqrt{n}(1+o(1))$, where $R_C^\star=\frac{\sigma_0^2}{\sigma_B^2\ln 2}\sqrt{\delta}$, and the converse establishes first-order optimality. In the heterogeneous case, the optimal first-order allocation satisfies $P_t\propto\sigma_{0,t}^4$, while imposing covertness before residualization yields a conservative design recovering the baseline constant $\bar R_C$. Thus, an irreducible post-cancellation residual floor changes the effective null variance and hence the SRL constant. If the Gaussian innovation model fails or residual-variance uncertainty remains non-negligible after conditioning, the present SRL results no longer apply, and noise-uncertainty or effective-secrecy mechanisms may lead to different, possibly linear-law, behavior.

\appendix
\phantomsection
\section*{Relative-Entropy Identities and Bounds}
\label{app:KL}

We collect the proofs of the relative-entropy expressions and inequalities used in Section~\ref{sec:kl-equalizer}. All statements apply conditioned on any admissible $S=s$ in the stationary innovation case, since the conditional laws are Gaussian with the stated variances.

\vspace{0.25em}
\subsection{One-Dimensional Variance Increase (Proof of~\eqref{eq:KLconvert}):} \label{app:KL_1}

For $\sigma_0^2 > 0$ and $P \ge 0$, with $\rho = P/\sigma_0^2$,
\begin{equation}
D\!\left(\mathcal N(0,\sigma_0^2+P)\,\middle\|\,\mathcal N(0,\sigma_0^2)\right)
= \frac12\bigl(\rho-\ln(1+\rho)\bigr).
\label{app:eq:KL-variance-increase}
\end{equation}
\emph{Proof.} Use $D(\mathcal N(0,\sigma_1^2)\|\mathcal N(0,\sigma_0^2))
=\tfrac12(\sigma_1^2/\sigma_0^2-1-\ln(\sigma_1^2/\sigma_0^2))$ with $\sigma_1^2=\sigma_0^2+P$. \hfill$\square$

\subsection{Upper and Lower Bounds for the Relative-Entropy Increment (Proof of~\eqref{eq:KLallt}):}\label{app:KL_2}

For $\rho \ge 0$,
\begin{equation}
\frac{\rho^2}{4(1+\rho)} \le \frac12\bigl(\rho-\ln(1+\rho)\bigr) \le \frac{\rho^2}{4}.
\label{app:eq:KL-increment-bounds}
\end{equation}
\emph{Proof.} $\ln(1+\rho) \ge \rho - \frac{\rho^2}{2}$ gives the upper bound; $\ln(1+\rho) \le \rho - \frac{\rho^2}{2(1+\rho)}$ gives the lower bound. \hfill$\square$

\subsection{Product Additivity (Proof of~\eqref{eq:KL-sum}):}\label{app:KL_3}

If $R_0^n$ and $R_1^n$ have product laws conditioned on $S=s$, with marginals $p_t$ and $q_t$, then
\begin{equation}
D\!\bigg(\prod_{t=1}^n p_t \,\bigg\|\, \prod_{t=1}^n q_t\bigg)
=\sum_{t=1}^n D(p_t\|q_t).
\label{app:eq:KL-product-additivity}
\end{equation}

\emph{Proof.} For product measures,
\begin{equation}
\log \frac{\prod_{t=1}^n p_t}{\prod_{t=1}^n q_t}
= \sum_{t=1}^n \log \frac{p_t}{q_t}.
\label{app:eq:log-product-ratio}
\end{equation}
Taking expectation with respect to $\prod_{t=1}^n p_t$ and interchanging sum and expectation yields~\eqref{app:eq:KL-product-additivity}. \hfill$\square$

\subsection{Blockwise Consequences (Clarifying Usage)}\label{app:KL_4}

For conditionally independent residual coordinates, combining \eqref{app:eq:KL-variance-increase} with \eqref{app:eq:KL-product-additivity} yields \eqref{eq:KL-sum}. Accordingly, \eqref{eq:KL-sum}--\eqref{eq:KL-lower-quad1} apply under the conditioned product-law residual model, with independence across \(t\) under both hypotheses given \(S=s\). The converse in Section~\ref{sec:Converse} does not require product additivity. It instead uses Lemma~\ref{lem:Gaussian}, which lower-bounds the relative entropy to the Gaussian null law for any zero-mean residual vector with the specified coordinate variances.


\end{document}